\newcommand{\bra}[1]{\langle#1|}
\newcommand{\ket}[1]{|#1\rangle}
\newtheorem{theorem}{Theorem}[section]
\newtheorem{lemma}[theorem]{Lemma}
\newtheorem{proposition}[theorem]{Proposition}
\theoremstyle{remark}
\newtheorem{remark}[theorem]{Remark}
\theoremstyle{definition}
\newtheorem{definition}[theorem]{Definition}
\theoremstyle{example}
\newtheorem{example}[theorem]{Example}
\theoremstyle{notation}
\begin{document}

\title{An analytic function approach to weak mutually unbiased bases}
\author{T. Olupitan, C. Lei, A. Vourdas\\
Department of Computing\\University of Bradford\\ Bradford BD7 1DP, UK}
\begin{abstract}
Quantum systems with variables in ${\mathbb Z}(d)$ are considered, and three different structures are studied.
The first is weak mutually unbiased bases,  for which the absolute value of the overlap of any two vectors in two different bases is $1/\sqrt{k}$ (where $k|d$) or $0$.
The second is maximal lines through the origin in the ${\mathbb Z}(d)\times {\mathbb Z}(d)$ phase space. 
The third is an analytic representation in the complex plane based on Theta functions, and their zeros. 
It is shown that there is a correspondence (triality) that links strongly these three apparently different structures.
For simplicity, the case where $d=p_1\times p_2$, where $p_1,p_2$ are odd prime numbers different from each other, is considered.

\end {abstract}
\pacs{03.65.Aa, 02.10.De}

\maketitle

\section{Introduction}

After the pioneering work by Schwinger \cite{SCH}, there has been a lot of work on various aspects of a quantum system
$\Sigma (d)$ with variables in ${\mathbb Z}(d)$ (the ring of integers modulo $d$), described
with a $d$-dimensional Hilbert space $H(d)$.
The work combines Quantum Physics with Discrete Mathematics and has applications to areas like quantum information, quantum cryptography, quantum coding, etc
(for reviews see \cite{1,2,3,4,5,6,7}).

A deep problem in this area is mutually unbiased bases \cite{m1,m2,m3,m4,m5,m6,m7,m8,m9,m10}. 
It is a set of bases, for which the absolute value of the overlap of any two vectors in two different bases is 
$1/\sqrt{d}$.
It is known that the number ${\mathfrak M}$ of mutually unbiased bases satisfies the inequality ${\mathfrak M} \le d+1$, and that
when $d$ is a prime number ${\mathfrak M}=d+1$.
What makes the case of prime $d$ special, is that ${\mathbb Z}(d)$ becomes a field, which is a stronger mathematical structure than a ring.
For the same reason, if we consider quantum systems with variables in the Galois field $GF(p^e)$ (where $p$ is a prime number),
the number of mutually unbiased bases is ${\mathfrak M} = p^e+1$.
The study of mutually unbiased bases for non-prime $d$, in which case ${\mathbb Z}(d)$ is a ring (but not a field), is a very difficult problem.
It is also related to the subjects of $t$-designs\cite{B,B1} and latin squares\cite {LS}.

Recent work \cite{W1,W2} introduced a weaker concept called weak mutually unbiased bases (WMUB).
It is a set of bases, for which the absolute value of the overlap of any two vectors in two different bases is 
$1/\sqrt{k}$, where $k|d$ ($k$ is a divisor of $d$), or zero. It has been shown that there are $\psi(d)$ (the Dedekind $\psi$-function) WMUBs.
This work has also studied the phase space ${\mathbb Z}(d)\times {\mathbb Z}(d)$ as a finite geometry ${\mathcal G}(d)$.

There exists much literature on finite geometries.
They consist of a finite number of points and lines which obey certain axioms (e.g., \cite{f1,f2,f3} in a mathematics context, and \cite{r1,r2,r3,r4} in a physics context).
Most of this work is on near-linear geometries, where two lines have at most one point in common.
The ${\mathbb Z}(d)\times {\mathbb Z}(d)$ geometry is based on rings and it does not obey this axiom.
Two lines have in common a `subline' which consists of $k$ points, where $k|d$.
Refs\cite{W1,W2} have shown that there is a duality between WMUBs in $H(d)$ and lines in ${\mathcal G}(d)$.
This shows a deep connection between finite quantum systems and the geometries of their phase spaces.

A very different problem is the use of analytic functions in the context of physical systems. 
After the pioneering work by Bargmann\cite{A0,A00} for the harmonic oscillator, analytic representations have been used with various quantum systems
(e.g., \cite{A1,A2,A3,A4,A5,A6,A7,A8,A9,A10}). 
In particular the zeros of the analytic functions have been used for the derivation of physical results.
For example, there are links between the growth of analytic functions at infinity, and the density of their zeros\cite{C1,C2,C3}, which lead to 
criteria for the overcompleteness or undercompleteness of a von Neumann lattice of coherent states.

Refs\cite{AN4,AN5} have studied analytic representations for quantum systems with variables in ${\mathbb Z}(d)$, using Theta functions \cite{THETA}
(see also ref\cite{AN6}).
Quantum states are represented with analytic functions in the cell ${\mathfrak S}=[0,d)\times [0,d)$ in the complex plane (i.e., in a torus).
These analytic functions have exactly $d$ zeros in the cell ${\mathfrak S}$, which determine uniquely the state of the system.

In this paper we use this language of analytic functions for the study WMUBs.
We show that: 
\begin{itemize}
\item
Each of the $d$ vectors in a WMUB has $d$ zeros on a straight line.
\item
In a given WMUB, the various vectors have zeros on parallel lines.
In different WMUBs, the slope of the lines of zeros, is different.
\item
The $d^2$ zeros in each WMUB, form a regular lattice in the cell ${\mathfrak S}$, which is the same for all WMUBs.
\end{itemize}

Based on these results we show that there is a triality between 
\begin{itemize}
\item
WMUBs
\item
Lines through the origin in the finite geometry ${\mathcal G}(d)$ of the phase space
\item
Sets of parallel lines of zeros of the vectors in WMUBs in the cell ${\mathfrak S}$
\end{itemize}
These three mathematical objects, which are very different from each other, have the same mathematical structure.
The work links the theory of analytic functions and their zeros, to finite quantum systems, finite geometries and more generally to Discrete Mathematics.

In order to avoid a complicated notation, in all sections except section II, 
we consider the case that $d=p_1\times p_2$, where $p_1,p_2$ are odd prime numbers, different from each other
(in section II we state in each subsection, what values $d$ takes).
All results are generalizable to the case $d=p_1\times...\times p_N$, where $d$ is an odd integer (see discussion).
In the case of even dimension $d$ (e.g., \cite{zak}) , some aspects of the formalism of finite quantum systems require special consideration, and 
further work is needed in order to extend the ideas of the present paper, to this case. 
Also when $d$ contains powers of prime numbers, further work is needed (based on labeling with elements of Galois fields).

In section 2 we introduce very briefly finite quantum systems, their analytic representation, and mutually 
unbiased bases,  in order to define the notation. In section 3 we review briefly the formalism of weak mutually unbiased bases.
An important ingredient is the factorization of $\Sigma (d)$ in terms of smaller systems $\Sigma (p_1)$ and $\Sigma (p_2)$, which is based on the 
Chinese remainder theorem, and its use by Good \cite{Good} in the context of finite Fourier transforms.
In section 4, we use the analytic representation to study WMUBs, and prove the results that we mentioned above.
We conclude in section 5, with a discussion of our results.

\section{Preliminaries}

\subsection{Analytic representation of quantum systems with variables in ${\mathbb Z}(d)$, with odd $d$}
We consider a finite quantum system with variables in ${\mathbb Z}(d)$ (the integers modulo $d$)\cite{1,2,3,4,5,6,7}.
Let $\ket{X; n}$ the basis of position states in the $d$-dimensional Hilbert space $H(d)$, and $\ket{P; n}$ the basis of momentum states:
\begin{eqnarray}\label{Fou}
\ket{P;n}={\cal F}|{X};n\rangle ;\;\;\;\;\;\;{\cal F}=d^{-1/2}\sum _{m,n}\omega (mn)\ket{X;m}\bra{X;n};\;\;\;\;
\omega (n)=\exp \left [i \frac{2\pi n}{d}\right ]
\end{eqnarray}
Here ${\cal F}$ is the finite Fourier transform.
Displacement operators are given by
\begin{eqnarray}\label{99}
{D}(\alpha,\beta)={Z}^\alpha {X}^\beta \omega (-2^{-1}\alpha \beta );\;\;\;\;\;\alpha, \beta \in {\mathbb Z}(d)
\end{eqnarray}
where
\begin{eqnarray}\label{69}
&&{Z}=\sum _{n}\omega (n)|{X};n\rangle \langle {X};n|=\sum _n\ket{{P};n+1}\bra{{P};n}\nonumber\\
&&{X}=\sum _{n}\omega (-n)|{P};n\rangle \langle {P};n|=\sum _n \ket{{X};n+1}\bra{{ X};n}\nonumber\\
&&{X}^{d}={Z}^{d}={\bf 1};\;\;\;\;\;\;
{X}{Z} = {Z}{X}\omega (-1)
\end{eqnarray}
The $\{{D}(\alpha,\beta)\omega (\gamma)\}$ form a representation of the Heisenberg-Weyl group in this context.

Let $\ket {g}$ be an arbitrary state
\begin{eqnarray}\label{6}
&&\ket {g}=\sum _m g_m\ket {X;m}=\sum _m{\widetilde g}_m\ket {P;m};\;\;\;\;\;\;\sum _m |g_m|^2=1\nonumber\\
&&{\widetilde g}_m=d^{-1/2}\sum _n\omega (-mn)g_n
\end{eqnarray}
We use the notation (star indicates complex conjugation)
\begin{eqnarray}
\ket {g^*}=\sum _m g_m^*\ket {X;m};\;\;\;\;\;\;
\bra {g}=\sum _m g_m^*\bra {X;m};\;\;\;\;\;\;\bra {g^*}=\sum _m g_m\bra {X;m}
\end{eqnarray}
We represent the state $\ket{g}$ with the function
\begin{eqnarray}\label{aaa1}
G(z)=\pi^{-1/4} \sum_{m=0}^{d-1} g_m^*\;\Theta_3 \left [\frac{\pi m}{d}-z\frac{\pi}{d};\frac{i}{d}\right ]
\end{eqnarray}
where $\Theta_3$ is Theta function \cite{THETA}:
\begin{eqnarray}\label{theta}
&&\Theta _3(u,\tau)=\sum_{n=-\infty}^{\infty}\exp(i\pi \tau n^2+i2nu)
\end{eqnarray}
Theta functions are `Gaussian functions wrapped on a circle', and in our case on a `discretized circle'.
Their periodicity properties are:
\begin{eqnarray}
&&\Theta _3(u+\pi,\tau)=\Theta _3(u,\tau +2)=\Theta _3(u,\tau)\nonumber\\
&&\Theta _3(u+\tau \pi,\tau)=\Theta _3(u,\tau)\exp[-i(\pi \tau+2u)]
\end{eqnarray}
For later use we mention that
\begin{equation}\label{y}
\Theta_3(u, \tau) = \left(-i\tau\right)^{-1/2}\exp\left[ \frac{u^2}{i\pi\tau}\right]\Theta_3\left( \frac{u}{ \tau}, \frac{-1}{ \tau}\right),
\end{equation}
and that their zeros are
\begin{eqnarray}\label{eq1}
\zeta _{MN}=(2M-1)\frac{\pi}{2}+(2N-1)\frac{i\pi}{2d}.
\end{eqnarray}
$G(z)$ is an analytic function and obeys the periodicity relations
\begin{eqnarray} \label{periodicity}
 &&G( z+d ) = G(z)\nonumber\\
 &&G( z+ i d)  = G(z)\exp\left( -\pi d -2i\pi z\right).
\end{eqnarray}
The scalar product is given by
\begin{eqnarray}\label{scalar}
 \langle g_2| g_1^* \rangle &=& \frac{ \sqrt{2\pi}}{d^{5/2}} \int_{\mathfrak S} dz_Rdz_I\exp\left( \frac{-2\pi}{d}z_I^2\right) G_1(z)G_2(z^\ast)=\sum _{m\in {\mathbb Z}(d)}g_{2m}^*g_{1m}^*
\end{eqnarray}
where $z_R, z_I$ are the real and imaginary parts of $z$.
${\mathfrak S}_{MN}=[M d,(M+1)d)\times [N d,(N+1)d)$ is a cell in the complex plane and
$(M,N)$ are integers labelling the cell. In the case $M=N=0$ we use the simpler notation ${\mathfrak S}$.
The proof of Eq.(\ref{scalar}) is based on the orthogonality of Theta functions.

The analytic function $G(z)$ has exactly $d$ zeros $\zeta _r$ in each cell and the sum of these zeros is \cite{AN4,AN5,AN6}
\begin{eqnarray}\label{con}
\sum _{r =1}^d \zeta _r= d(M+iN)+\frac{d^2}{2}(1+i).
\end{eqnarray}
So in each cell $d-1$ zeros are independent, and the last is determined by this constraint.

\subsection{Mutually unbiased bases using $Sp(2,{\mathbb Z}(d))$ symplectic transformations, with odd prime $d$}\label{S200}

In this subsection $d$ is a prime number and therefore ${\mathbb Z}(d)$ is a field.
Symplectic transformations are defined as
\begin{eqnarray}\label{nnn}
&&X'=S(\kappa, \lambda|\mu ,\nu)\;X\;[S(\kappa, \lambda|\mu ,\nu)]^{\dagger}=D(\lambda, \kappa)\nonumber\\
&&Z'=S(\kappa, \lambda|\mu ,\nu)\;Z\;[S(\kappa, \lambda|\mu ,\nu)]^{\dagger}=D(\nu,\mu)\nonumber\\
&&\kappa \nu-\lambda \mu=1;\;\;\;\;\;\kappa, \lambda, \mu, \nu \in {\mathbb Z}(d)
\end{eqnarray}
They form a representation of the $Sp(2,{\mathbb Z}(d))$ group. 
Eqs.(\ref{nnn}) define uniquely (up to a phase factor) the symplectic transformations. $S(\kappa, \lambda|\mu ,\nu)$ is given by\cite{2}
\begin{eqnarray}
&&S(\kappa, \lambda|\mu ,\nu)=S(1,0|\xi _1,1)S(1,\xi_2|0,1)S(\xi_3,0|0,\xi _3^{-1})\nonumber\\
&&S(1,0|\xi _1,1)=\sum _n\ket{X;n}\bra{X;\xi _1n}\nonumber\\
&&S(1,\xi_2|0,1)=\sum _n\omega (2^{-1}\xi_2 n^2)\ket{X;n}\bra{X;n}\nonumber\\
&&S(\xi_3,0|0,\xi _3^{-1})=\sum _n\omega (2^{-1}\xi_3 n^2)\ket{P;n}\bra{P;n}
\end{eqnarray}
where
\begin{eqnarray}
\xi _1=\kappa \mu (1+\lambda \mu)^{-1};\;\;\;\;
\xi _2=\lambda \kappa ^{-1} (1+\lambda \mu);\;\;\;\;
\xi _3=\kappa (1+\lambda \mu)^{-1}.
\end{eqnarray}
The multiplication rule is given by
\begin{eqnarray}\label{500}
&&S(\kappa _1, \lambda _1|\mu _1,\nu _1)S(\kappa _2, \lambda _2|\mu _2,\nu _2)=S(\kappa, \lambda|\mu ,\nu)\nonumber\\
&&\left (
\begin{array}{cc}
\kappa& \lambda\\
\mu&\nu
\end{array}
\right )=
\left (
\begin{array}{cc}
\kappa _2& \lambda _2\\
\mu _2&\nu _2
\end{array}
\right )
\left (
\begin{array}{cc}
\kappa _1& \lambda _1\\
\mu _1&\nu _1
\end{array}
\right )
\end{eqnarray}

We consider the following special case of symplectic transformations:
\begin{eqnarray}\label{DD175}
&&X'=S(0,-\mu ^{-1}|\mu,\nu)\;X\;[S(0, -\mu ^{-1}|\mu;\nu)]^{\dagger}
=Z^{-\mu ^{-1}};\;\;\;\;\mu, \nu\in {\mathbb Z}(d)\nonumber\\
&&Z'=S(0,-\mu ^{-1}|\mu,\nu)\;Z\;[S(0,-\mu ^{-1}|\mu,\nu)]^{\dagger}=X^{\mu}Z^{\nu}\omega (2^{-1}\mu \nu)
\end{eqnarray}
We note that $S(0,-1|1,0)={\cal F}^{-1}$.
We can show that these transformations preserve Eq.(\ref{69}).
Acting with them on the position basis, we get new bases:
\begin{eqnarray}\label{Sym7}
\ket{X(\mu,\nu);m} \equiv S(0,-\mu ^{-1}|\mu,\nu)\ket{X;m};\;\;\;\;\;\nu=0,..., d-1
\end{eqnarray}
We note that $\ket{X(\mu,0);m}=\ket{P; -\mu ^{-1}m}$.

\begin{lemma}\label{l}
\begin{eqnarray}\label{3cv}
\ket{X(\mu,\nu);m}=\frac{1}{ \sqrt{d}} \sum_{j=0}^{d-1} \omega[\mu ^{-1}\phi (m,j,\nu)]|X;j\rangle ;\;\;\;\;\phi (m,j,\nu)=-jm+2^{-1}\nu j^2
\end{eqnarray}
\end{lemma}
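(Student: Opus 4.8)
The plan is to avoid the generic three-factor decomposition $S(\kappa,\lambda|\mu,\nu)=S(1,0|\xi_1,1)S(1,\xi_2|0,1)S(\xi_3,0|0,\xi_3^{-1})$, which is unusable for this element: here $\kappa=0$, so the parameters $\xi_2=\lambda\kappa^{-1}(1+\lambda\mu)$ and $\xi_3=\kappa(1+\lambda\mu)^{-1}$ that involve $\kappa^{-1}$ are ill-defined. Instead I would peel off a single position-diagonal chirp and reduce to the already-known $\nu=0$ case, which Eq.(\ref{Sym7}) records as $\ket{X(\mu,0);m}=\ket{P;-\mu^{-1}m}$.

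First I would use the multiplication rule (\ref{500}) to factor $S(0,-\mu^{-1}|\mu,\nu)=S(1,\mu^{-1}\nu|0,1)\,S(0,-\mu^{-1}|\mu,0)$. This is checked at the level of the associated matrices: since the rule sends an operator product $S_1S_2$ to the reversed matrix product $M_2M_1$, one computes $\begin{pmatrix}0&-\mu^{-1}\\\mu&0\end{pmatrix}\begin{pmatrix}1&\mu^{-1}\nu\\0&1\end{pmatrix}=\begin{pmatrix}0&-\mu^{-1}\\\mu&\nu\end{pmatrix}$, after verifying the symplectic condition $\kappa\nu-\lambda\mu=1$ for each of the two factors. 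Applying this operator identity to $\ket{X;m}$ and inserting the note $\ket{X(\mu,0);m}=\ket{P;-\mu^{-1}m}$ then gives $\ket{X(\mu,\nu);m}=S(1,\mu^{-1}\nu|0,1)\ket{P;-\mu^{-1}m}$.

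Next I would expand $\ket{P;-\mu^{-1}m}$ in the position basis with the Fourier transform (\ref{Fou}), obtaining $\ket{P;-\mu^{-1}m}=d^{-1/2}\sum_j\omega(-\mu^{-1}jm)\ket{X;j}$, and then act with the explicit chirp $S(1,\mu^{-1}\nu|0,1)=\sum_n\omega(2^{-1}\mu^{-1}\nu n^2)\ket{X;n}\bra{X;n}$, which is diagonal in position and hence multiplies the coefficient of $\ket{X;j}$ by $\omega(2^{-1}\mu^{-1}\nu j^2)$. Collecting the two phases into $\omega[\mu^{-1}(-jm+2^{-1}\nu j^2)]=\omega[\mu^{-1}\phi(m,j,\nu)]$ produces exactly (\ref{3cv}).

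The main obstacle is recognizing that the off-the-shelf decomposition degenerates at $\kappa=0$ and selecting the right substitute; once the chirp-times-$(\nu=0)$ factorization is identified, the remaining work is only a short Fourier expansion together with a diagonal phase multiplication. A secondary point worth checking is the internal consistency of the base case itself and of the phase convention: expanding $\ket{P;-\mu^{-1}m}$ already reproduces (\ref{3cv}) at $\nu=0$, which both confirms the note $\ket{X(\mu,0);m}=\ket{P;-\mu^{-1}m}$ and isolates the chirp as the single extra ingredient needed for general $\nu$, with no spurious overall phase surviving since the chirp operator carries none.
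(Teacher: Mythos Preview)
Your proof is correct, but it follows a different route from the paper's.

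The paper does not construct the state; it \emph{verifies} that the right-hand side of (\ref{3cv}) has the defining properties of $S(0,-\mu^{-1}|\mu,\nu)\ket{X;m}$. Concretely, it checks that the proposed vector is an eigenvector of $Z'=X^{\mu}Z^{\nu}\omega(2^{-1}\mu\nu)$ with eigenvalue $\omega(m)$, and that $X'=Z^{-\mu^{-1}}$ acts as the shift $m\mapsto m+1$ on it. Since the pair $(X',Z')$ fixes a basis up to an overall phase, this pins down $\ket{X(\mu,\nu);m}$ up to that phase. Your argument is instead \emph{constructive}: you factor $S(0,-\mu^{-1}|\mu,\nu)=S(1,\mu^{-1}\nu|0,1)\,S(0,-\mu^{-1}|\mu,0)$ via the multiplication rule (\ref{500}), feed in the base case $\ket{X(\mu,0);m}=\ket{P;-\mu^{-1}m}$ noted after (\ref{Sym7}), expand that momentum state in position, and apply the explicit position-diagonal chirp. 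The two approaches trade off cleanly: the paper's verification is self-contained and does not lean on the $\nu=0$ note, but leaves the overall phase implicit; your derivation imports that note as input, but then produces the exact coefficients (including the phase) in one stroke and makes transparent why the answer is ``Fourier kernel times Gaussian chirp.'' Your observation that the generic $\xi$-decomposition degenerates at $\kappa=0$ is a nice diagnostic, and your $\nu=0$ consistency check closes the loop with the note you invoked.
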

\begin{proof}
We first prove that these states are eigenstates of $Z'=X^{\mu}Z^{\nu}\omega (2^{-1}\nu \mu)$.
\begin{align}
&&Z'\ket{X(\mu,\nu);m}= \frac{1}{ \sqrt{d}} \omega (2^{-1}\nu \mu)\sum_{j=0}^{d-1} \omega [\mu ^{-1}\phi (m,j,\nu)]X^{\mu}Z^{ \nu}|X;j\rangle\nonumber\\&&
= \frac{1}{ \sqrt{d}}\omega (2^{-1}\nu \mu) \sum_{j=0}^{d-1} \omega [\mu ^{-1}\phi (m,j,\nu)]\omega( \nu j)|X;j+\mu\rangle
\end{align}
We now change variables $j' = j+\mu$ and we get
\begin{align}
&Z'\ket{X(\mu,\nu);m}=\omega(m)\ket{X(\mu,\nu);m}
\end{align}
We next show that $X'\ket{X(\mu,\nu);m}=\ket{X(\mu,\nu);m+1}$.
\begin{align}
&Z^{-\mu ^{-1}}\ket{X(\mu,\nu);m}= \frac{1}{ \sqrt{d}} \sum_{j=0}^{d-1} \omega [\mu ^{-1}\phi (m,j,\nu)]Z^{-\mu ^{-1}}|X;j\rangle = \frac{1}{ \sqrt{d}} \sum_{j=0}^{d-1} \omega [\mu ^{-1}\phi (m,j,\nu)]\omega(-j\mu ^{-1})|X;j\rangle \nonumber \\
&= \frac{1}{ \sqrt{d}} \sum_{j=0}^{d-1} \omega [\mu ^{-1}\phi (m+1,j,\nu)]|X;j\rangle =\ket{X(\mu, \nu);m+1}
\end{align}
\end{proof}
It is known that for a prime number $d$ there are $d+1$ mutually unbiased bases given by
\begin{eqnarray}\label{S100}
B(\mu,-1)=\{\ket{X;m}\};\;\;\;\;B(\mu, \nu)=\{\ket{X(\mu, \nu);m}\};\;\;\;\;\;\nu=0,1,...,d-1.
\end{eqnarray}
Here $\mu $ is fixed. $B(\mu, 0)$ is the basis of momentum states $\{\ket{X(\mu, 0);m}=\ket{P;-\mu ^{-1} m}\}$.
They are mutually unbiased bases\cite{m1,m2,m3,m4,m5,m6,m7,m8,m9}, because for all $\nu \ne \nu'$ and for all $n,m$
\begin{eqnarray}\label{8}
|\langle X(\mu, \nu);n\ket{X(\mu, \nu ');m}|=d^{-1/2}
\end{eqnarray}

\subsection{Maximal lines through the origin in ${\cal G}(d)$}

Various aspects of the ${\mathbb Z}(d)\times {\mathbb Z}(d)$ phase space as a finite geometry ${\cal G}(d)$ have been studied in\cite{W1,W2}.
A special class of finite geometries which has been studied extensively in the discrete mathematics literature\cite{f1,f2,f3} is the near-linear geometries, which have the axiom
that two lines have at most one point in common. These geometries are intimately related to fields.
The ${\cal G}(d)$ geometry does not obey this axiom, is based on rings and it is a non-near-linear geometry.
Two lines through the origin have a `subline' in common, which consists of $k$ points, where $k|d$. If $d$ is a prime number, $k$ is $1$ (in which case the lines have one point in common) or $d$ (in which case the lines are identical),
and this is the near-linear geometry.

In this subsection $d=p_1\times p_2$, where $p_1,p_2$ are odd prime numbers different from each other.
The ${\cal G}(d)$ is defined as $(P(d),L(d))$ where $P(d)$ is the set of the $d^2$ points $(m,n)\in {\mathbb Z}(d)\times {\mathbb Z}(d)$
and $L(d)$ is the set of lines.
A maximal line through the origin is the set of $d$ points
\begin{eqnarray}
&&L(\rho , \sigma )=\{(r\rho ,r\sigma )\;|\;r \in {\mathbb Z}(d)\};\;\;\;\;\rho , \sigma \in {\mathbb Z}(d).
\end{eqnarray}
If $\tau$ is an invertible element of ${\mathbb Z}(d)$ then $L(\rho,\sigma )$ is the same line as $L(\tau \rho,\tau \sigma )$.
An example of a non-maximal line is $L(p_1,\tau p_1)$ (it has only $p_2$ points).
There are $\psi(d)=(p_1+1)(p_2+1)$ (the Dedekind psi function) maximal lines through the origin in ${\mathcal G}(d)$. 

Symplectic transformations on a point $(\rho , \sigma ) \in {\mathbb Z}(d)\times {\mathbb Z}(d)$ are given by
\begin{eqnarray}
&&{\cal S}(\kappa, \lambda|\mu ,\nu)(\rho , \sigma )=
(\rho , \sigma )\left (
\begin{array}{cc}
\kappa& \lambda\\
\mu&\nu
\end{array}
\right )
=(\kappa \rho +\mu \sigma, \lambda \rho +\nu \sigma )\nonumber\\&&\kappa \nu-\lambda \mu=1;\;\;\;\;\;\kappa, \lambda, \mu, \nu \in {\mathbb Z}(d)
\end{eqnarray}
where we represent points with rows and act on the right, or by
\begin{eqnarray}
&&{\cal S}(\kappa, \lambda|\mu ,\nu)
\left (
\begin{array}{c}
\rho\\
\sigma
\end{array}
\right )=
\left (
\begin{array}{cc}
\kappa& \lambda\\
\mu&\nu
\end{array}
\right )^T
\left (
\begin{array}{c}
\rho\\
\sigma
\end{array}
\right )
=
\left (
\begin{array}{c}
\kappa \rho +\mu \sigma\\
\lambda \rho +\nu \sigma
\end{array}
\right )
\end{eqnarray}
where we represent points with columns and act with the transposed matrix on the left.
With this notation we get the same multiplication rule as in Eq.(\ref{500}).
We have here a representation of the $Sp(2,{\mathbb Z}(d))$ group.

Symplectic transformations on points lead to symplectic transformations on lines:
\begin{eqnarray}\label{efb}
{\cal S}(\kappa, \lambda|\mu ,\nu)L(\rho , \sigma )=L(\kappa \rho +\mu \sigma, \lambda \rho +\nu \sigma ).
\end{eqnarray}

\section{Factorization}
In the rest of the paper $d=p_1\times p_2$, where $p_1,p_2$ are odd prime numbers different from each other.
\subsection{Factorization of the system in terms of smaller systems}

Based on the Chinese remainder theorem, and following ref.\cite{Good} on the factorization of finite Fourier transforms, we introduce two bijective maps between
${\mathbb Z}(d)$ and ${\mathbb Z}(p_1)\times {\mathbb Z}(p_2)$:
\begin{eqnarray}\label{map1}
&&m\leftrightarrow (m_1,m_2);\;\;\;\;\;m_i=m ({\rm mod}\ p_i);\;\;\;\; m=m_1s_1+m_2s_2\;({\rm mod}\ d)\nonumber\\
&&m\in {\mathbb Z}(d);\;\;\;\;\;m_i\in {\mathbb Z}(p_i),
\end{eqnarray}
and
\begin{eqnarray}\label{map2}
&&m\leftrightarrow (\overline m_1,\overline m_2);\;\;\;\;\;\overline m_i=mt_i=m_it_i({\rm mod}\ p_i);\;\;\;\;m=\overline m_1 r_1+\overline m_2r_2\;({\rm mod}\ d)\nonumber\\
&&m\in {\mathbb Z}(d);\;\;\;\;\;{\overline m}_i\in {\mathbb Z}(p_i).
\end{eqnarray}
Here $r_i,t_i,s_i$ are the constants
\begin{equation}\label{20}
r_1=\frac{d}{p_1}=p_2;\;\;\;\;r_2=\frac{d}{p_2}=p_1;\;\;\;\;t_i r_i=1\;( mod\ p_i);\;\;\;\;\;s_i=t_i r_i\in {\mathbb Z}(d).
\end{equation}
We note that
\begin{eqnarray}\label{20A}
&&s_1s_2=0\;({\rm mod}\ d);\;\;\;\;\;s_1^2=s_1\;({\rm mod}\ d)\;\;\;\;\;s_2^2=s_2\;({\rm mod}\ d)\;\;\;\;\;s_1+s_2=1\;({\rm mod}\ d)\nonumber\\
&&p_2s_1=p_2\;({\rm mod}\ d);\;\;\;\;p_1s_2=p_1\;({\rm mod}\ d);\;\;\;\;p_1s_1=p_2s_2=0\;({\rm mod
}\ d).
\end{eqnarray}
Also for the map of Eq.(\ref{map1})
\begin{eqnarray}\label{20B}
m+\ell \;\leftrightarrow \;({m_1+\ell _1},{m_2+\ell _2});\;\;\;\;\;m\ell \;\leftrightarrow \;({m_1\ell _1},{m_2\ell _2}),
\end{eqnarray}
and for the map of Eq.(\ref{map2})
\begin{eqnarray}\label{20C}
 m+\ell \;\leftrightarrow \;(\overline {m}_1+\overline {\ell }_1,\overline {m}_2+\overline{\ell} _2)
;\;\;\;\;\;
 m\ell \;\leftrightarrow \;(\overline {m}_1\ell _1, \overline {m}_2\ell _2)
\end{eqnarray}
Using the notation $\omega _i(n)=\exp (\frac{2\pi n_i}{p_i})$ where $n_i\in {\mathbb Z}(p_i)$, we can show that
\begin{eqnarray}\label{20D}
\omega (mn)=\omega _1(m_1{\overline n}_1)\omega _2(m_2{\overline n}_2).
\end{eqnarray}
Eqs.(\ref{20A}), (\ref{20B}), (\ref{20C}), (\ref{20D}), are important for the proof of various relations below.

We introduce a bijective map from $H(d)$ to $H(p_1)\otimes H(p_2)$ as follows\cite{2}.
We use the map of Eq.(\ref{map2}) for position states:
\begin{equation}
\ket{X;m}\;\;\leftrightarrow\;\;\ket{X_1;{\overline m}_1} \otimes \ket{X_2;{\overline m}_2},
\end{equation}
where $\ket{X_i;{\overline m}_i}$ are position states in $H(p_i)$.
Using Eq.(\ref{20D}) we prove that the corresponding map for momentum states, is based on the map of Eq.(\ref{map1}), and it is given by
\begin{equation}
\ket{P;m}\;\;\leftrightarrow\;\;\ket{P_1;{m}_1}\otimes \ket{P_2;{m}_2}
\end{equation}
where $\ket{P_i;{m}_i}$ are momentum states in $H(p_i)$.
The Fourier transform between position and momentum states, implies that if the map of Eq.(\ref{map2}) is used for position states, then the map of Eq.(\ref{map1})
should be used for momentum states. 

For later use we also factorize the symplectic transformations.
The $Sp(2,{\mathbb Z}(d))$ is factorized as
$Sp(2,{\mathbb Z}(p_1))\times Sp(2,{\mathbb Z}(p_2))$, as follows (proposition 3.1 in \cite{SV2}):
\begin{equation}\label{345}
S(\kappa, \lambda |\mu , \nu)=S(\kappa _1, \lambda _1r_1|\overline \mu _1, \nu _1)\otimes S(\kappa _2, \lambda _2r_2|\overline \mu _2, \nu _2)
\end{equation}
where the $\kappa _i$, $\lambda _i$, ${\overline \mu}_i$ , $\nu_i$ are related to $\kappa$, $\lambda$, $\mu $, $\nu$, as in Eqs(\ref{map1}),(\ref{map2}).
Below we need the special cases
\begin{eqnarray}\label{135}
&&S(0,-\mu ^{-1}|\mu,\nu)=S(0,-1|1, \nu _1)\otimes  S(0,-1|1, \nu _2)\nonumber\\
&&\nu=\nu_1s_1+\nu_2s_2;\;\;\;\;\;\mu=p_1+p_2;\;\;\;\;\;\mu ^{-1}=p_2^{-1}s_1+p_1^{-1}s_2\;({\rm mod} \;d),
\end{eqnarray}
and
\begin{eqnarray}\label{dfg}
&&S(\kappa,\lambda|\mu,\nu)={\bf 1}\otimes  S(0,-1|1, \nu _2)\nonumber\\
&&\kappa= s_1;\;\;\;\;\lambda=-s_2p_1^{-1}\;\;\;\;\;\mu=p_1;\;\;\;\;\;\;\nu=s_1+\nu_2s_2
\end{eqnarray}
and
\begin{eqnarray}\label{dfg1}
&&S(\kappa,\lambda|\mu,\nu)=S(0,-1|1, \nu _1)\otimes {\bf 1}\nonumber\\
&&\kappa= s_2;\;\;\;\;\lambda=-s_1p_2^{-1}\;\;\;\;\;\mu=p_2;\;\;\;\;\;\;\nu=s_2+\nu_1s_1.
\end{eqnarray}

As an example we consider the case that $d=21$, i.e., $p_1=3$ and $p_2=7$. Then
\begin{eqnarray}
&&r_1=7;\;\;\;t_1=1;\;\;\;\;s_1=7\nonumber\\
&&r_2=3;\;\;\;t_2=5;\;\;\;\;s_2=15\nonumber\\
&&\mu=10;\;\;\;\;-\mu ^{-1}=2
\end{eqnarray}
and we get
\begin{eqnarray}
&&S(0,2|10 , 7\nu _1+15 \nu_2)=S(0, -1|1, \nu _1)\otimes S(0, -1|1, \nu _2)\nonumber\\
&&S(7,9|3,7+15\nu _2)={\bf 1}\otimes  S(0,-1|1, \nu _2)\nonumber\\
&&S(15,-7|7, 15+7\nu _1)=S(0,-1|1,\nu_1)\otimes {\bf 1}
\end{eqnarray}

\subsection{Weak mutually unbiased bases}

For $d=p_1\times p_2$, references \cite{W1,W2} introduced in $H(d)=H(p_1)\otimes H(p_2)$ a weaker than mutually unbiased bases concept, 
called weak mutually unbiased bases (WMUB). 
They are tensor products of mutually unbiased bases in $H(p_i)$. They are given by
\begin{eqnarray}
&&\ket{{\cal X}(\nu _1,\nu _2);\overline m_1,\overline m_2}=\ket{X_1(\nu _1);\overline m_1}
\otimes \ket{X_2(\nu _2);\overline m_2}\nonumber\\
&&\ket{X_i(\nu _i);\overline m_i}=S(0,-1|1, \nu _i)\ket{X_i;\overline m_i};\;\;\;\;\;\overline m_i\in {\mathbb Z}(p_i).
\end{eqnarray}
We also include the $\nu _i=-1$, in which case $\ket{X_i(-1);\overline m_i}=\ket{X_i;\overline m_i}$.
Therefore $\nu _i=-1,...,p_i-1$.

In the special case $\nu _1=\nu _2=-1$ we get
\begin{eqnarray}
\ket{{\cal X}(-1,-1);\overline m_1,\overline m_2}=\ket{X_1(-1);\overline m_1}
\otimes \ket{X_2(-1);\overline m_2}=\ket{X_1;\overline m_1}
\otimes \ket{X_2;\overline m_2}
\end{eqnarray}
In the special case $\nu_1=\nu _2=0$ we get
\begin{eqnarray}
\ket{{\cal X}(0,0);\overline m_1,\overline m_2}=\ket{X_{1}(0);\overline m_1}
\otimes \ket{X_{2}(0);\overline m_2}=\ket{P_{1}; m_1}
\otimes \ket{P_{2}; m_2}
\end{eqnarray}

The overlap of two vectors in two different bases, is $0$ or $1/k$ where $k$ is a divisor of $d$:
\begin{eqnarray}\label{8}
|\langle { {\cal X}}(\nu _1,\nu _2);\overline m_1,\overline m_2\ket{{\cal X}(\nu _1',\nu _2');\overline r_1,\overline r_2}|^2=\frac{1}{k}\;\;{\rm or}\;\;0;\;\;\;\;k|d.
\end{eqnarray}
The strict requirement that the square of the absolute value of the overlap is $1/d$ in mutually unbiased bases, is replaced with the weaker requirement that it is
$1/k$ or $0$. And that is why we call them weak mutually unbiased bases. There are
$\psi (d)=(p_1+1)(p_2+1)$
weak mutually unbiased bases.

Taking into account Eq.(\ref{135}), we can relabel the $\ket{{\cal X}(\nu _1,\nu _2);\overline m_1,\overline m_2}$ as follows:
\begin{itemize}
\item
\begin{eqnarray}\label{np1}
&&\ket{{\cal X}(\nu _1,\nu _2);\overline m_1,\overline m_2}=S(0,-1|1,\nu _1)\ket{X;{\overline m}_1}\otimes S(0,-1|1,\nu _2)\ket{X;{\overline m}_2}\nonumber\\
&&=S(0,-\mu ^{-1}|\mu,\nu )\ket{X;m}=S(0,-1|1,\mu ^{-1}\nu )S(\mu ^{-1},0|0,\mu)\ket{X;m}\nonumber\\
&&=S(0,-1|1,\mu ^{-1}\nu )\ket{X;m\mu ^{-1}}\equiv \ket{{X}(1,\mu ^{-1}\nu );m \mu ^{-1}}\nonumber\\
&&\nu=\nu _1s_1+\nu _2s_2;\;\;\;\;\;\mu=p_1+p_2;\;\;\;\;\;\nu_i=0,..., p_i-1
\end{eqnarray}
Here we have used Eq.(\ref{135}), and $m$ is related to $\overline m_1,\overline m_2$ through Eq.(\ref{map2}). 
\item
\begin{eqnarray}\label{np2}
&&\ket{{\cal X}(-1,\nu _2);\overline m_1,\overline m_2}=\ket{X;{\overline m}_1}\otimes S(0,-1|1,\nu _2)\ket{X;{\overline m}_2}\nonumber\\&&=
\ket{X;{\overline m}_1}\otimes S(0,-1|1,\nu _2)\ket{X;{\overline m}_2}=S(\kappa, \lambda|\mu,\nu)\ket{X;m}=
\ket{{X}(p_1, s_1+\nu _2s_2 );m}\nonumber\\
&&\kappa=s_1;\;\;\;\;
\lambda=-s_2p_1^{-1};\;\;\;\;
\mu=p_1;\;\;\;\;
\nu=s_1+\nu_2s_2
\end{eqnarray}
Here we used Eq.(\ref{dfg}).
In a similar way we get
\begin{eqnarray}\label{np3}
\ket{{\cal X}(\nu _1,-1);\overline m_1,\overline m_2}=\ket{{X}(p_2, s_2+\nu _1s_1 );m}
\end{eqnarray}
\item
\begin{eqnarray}\label{np4}
\ket{{\cal X}(-1,-1);\overline m_1,\overline m_2}=\ket{X;{\overline m}_1}\otimes \ket{X;{\overline m}_2}=\ket{{X}(0,1);m}
\end{eqnarray}
\end{itemize}
There are $p_1p_2$ states in Eq.(\ref{np1}) (which have already been introduced in Eq.(\ref{S100})),
$p_1+p_2$ states in Eqs.(\ref{np2}),(\ref{np3}) and one state in Eq.(\ref{np4}).
Together they make the set of $\psi(d)=(p_1+1)(p_2+1)$ weak mutually unbiased bases.
\begin{remark}\label{zxd}
From the above it is clear that we use two different notations, `the factorized notation' (for which we use calligraphic letters) and the `unfactorized notation'.
In the unfactorized notation we have four different cases where different symplectic transformations act on the position states:
\begin{eqnarray}
&&\ket{{X}(1,\mu ^{-1}\nu);m\mu^{-1}}=S(0,-1|1,\mu ^{-1}\nu )\ket{X;m\mu ^{-1}}\nonumber\\
&&\ket{{X}(p_1, s_1+\nu _2s_2 );m}=S(s_1, -s_2p_1^{-1}|p_1,s_1+\nu _2s_2 )\ket{X;m}\nonumber\\
&&\ket{{X}(p_2, s_2+\nu _1s_1 );m}=S(s_2, -s_1p_2^{-1}|p_2,s_2+\nu _1s_1 )\ket{X;m}\nonumber\\
&&\ket{{X}(0,1);m}=\ket{X;{\overline m}_1}\otimes \ket{X;{\overline m}_2}
\end{eqnarray}
\end{remark}
In the `factorized notation' ${\cal B}(\nu _1, \nu _2)$ is the basis
\begin{eqnarray}
{\cal B}(\nu _1, \nu _2)=\{\ket{{\cal X}(\nu _1,\nu _2);\overline m_1,\overline m_2}\};\;\;\;\;\;\nu _i=-1,...,p_i-1.
\end{eqnarray}
In the `unfactorized notation'
\begin{eqnarray}
{B}(\mu, \nu )=\{\ket{{X}(\mu, \nu );m}\},
\end{eqnarray}
where $\mu$ takes the values $1, p_1, p_2, 0$.

The overlap of Eq.(\ref{8}) for vectors in two bases ${\cal B}(\nu _1, \nu _2)$ and ${\cal B}(\nu _1', \nu _2')$
takes one of the two values $\frac{r(\nu _1, \nu _2|\nu _1', \nu _2')}{d}$ or $0$.
We express this as
\begin{eqnarray}\label{jjj}
({\cal B}(\nu _1, \nu _2), {\cal B}(\nu _1', \nu _2'))=\frac{r(\nu _1, \nu _2|\nu _1', \nu _2')}{d}\;\;{\rm or}\;\;0
\end{eqnarray}
where
\begin{eqnarray}\label{200}
&&r(\nu _1, \nu _2|\nu _1', \nu _2')=1\;\;{\rm if}\;\;\nu _1\ne \nu_1' \;\;{\rm and}\;\;\nu _2\ne \nu_2'\nonumber\\
&&r(\nu _1, \nu _2|\nu _1', \nu _2')=p_1\;\;{\rm if}\;\;\nu _1= \nu_1' \;\;{\rm and}\;\;\nu _2\ne \nu_2'\nonumber\\
&&r(\nu _1, \nu _2|\nu _1', \nu _2')=p_2\;\;{\rm if}\;\;\nu _1\ne \nu_1' \;\;{\rm and}\;\;\nu _2= \nu_2'
\end{eqnarray}
In the first case both `unprimed factor bases' are different from the corresponding `primed factor bases' and therefore the result is always $1/(p_1p_2)$ (it cannot be zero).
In the second case the first `unprimed factor basis' is the same as the `primed factor basis', and therefore the result is $1/p_2$ or zero.
Analogous comment can be made for the last case.

\subsection{Factorization of the maximal lines in ${\cal G}(d)$ }

We represent a point $(\rho ,\sigma )$ in ${\mathbb Z}(d)\times {\mathbb Z}(d)$ as
\begin{eqnarray}
(\rho, \sigma)=(\overline \rho_1, \sigma_1)\times(\overline  \rho _2, \sigma _2);\;\;\;\;\overline \rho _i,  \sigma _i \in {\mathbb Z}(p_i)
\end{eqnarray}
Here we used the map of Eq.(\ref{map2}) for the first variable and the the map of Eq.(\ref{map1}) for the second variable.
The use of the two maps is important for the duality between maximal lines through the origin in ${\mathcal G}(d)$, and weak mutually unbiased bases in $H(d)$.
A maximal line $L(\rho ,\sigma )$ in ${\mathbb Z}(d)\times {\mathbb Z}(d)$ can now be factorized as
\begin{eqnarray}\label{line1}
L(\rho, \sigma)=L(\overline \rho_1,  \sigma_1)\times L(\overline  \rho _2, \sigma _2);\;\;\;\;\overline \rho _i,  \sigma _i \in {\mathbb Z}(p_i)
\end{eqnarray}
This is made clear in the following proposition.
\begin{proposition}
\mbox{}
\begin{itemize}
\item[(1)]
if $\overline \rho _i \ne 0\;({\rm mod}\;p_i)$, the $(\overline \rho _i)^{-1}$ exists in ${\mathbb Z}(p_i)$, and the line
$L(\overline\rho_i,  \sigma_i)=L(1, (\overline \rho _i) ^{-1}\sigma_i)$.
Also $L(\rho,  \sigma )=L(1, \rho  ^{-1}\sigma )$ and Eq.(\ref{line1}) can be written as
\begin{eqnarray}\label{line2}
&&L(1,\mu ^{-1}\nu)=L(1,\nu_1)\times L(1,\nu _2)\equiv {\cal L}(\nu _1,\nu _2)\nonumber\\
&&\nu=\nu _1s_1+\nu _2 s_2;\;\;\;\;\;\;\nu _i=(\overline\rho _i) ^{-1}\sigma_i\in {\mathbb Z}(p_i);\;\;\;\;\;\mu ^{-1}\nu =\rho ^{-1}\sigma\in {\mathbb Z}(p)\nonumber\\
&&\mu=p_1+p_2
\end{eqnarray}

\item[(2)]
If $\overline \rho _1=p_1=0 \;({\rm mod}\;p_1)$ then $\nu _1=-1$ by definition and 
\begin{eqnarray}\label{line3}
&&L(p_1,s_1+s_2\nu _2)=L(0,1)\times L(1,\nu _2)\equiv {\cal L} (-1,\nu _2);\;\;\;\;\;
\nu _2=(\overline \rho _2) ^{-1} \sigma_2
\end{eqnarray}
Similar result holds in the case that $\rho _2=p_2=0 \;({\rm mod}\;p_2)$:
\begin{eqnarray}\label{line30}
&&L(p_2,s_2+s_1\nu _1)=L(1,\nu _1)\times L(0,1)\equiv {\cal L} (\nu _1, -1);\;\;\;\;\;
\nu _1=(\overline\rho _1) ^{-1} \sigma_1
\end{eqnarray}

\item[(3)]
If $\rho _1=0 \;({\rm mod}\;p_1)$ and $\rho _2=0 \;({\rm mod}\;p_2)$ then $\nu _1=\nu _2=-1$ by definition and 
\begin{eqnarray}\label{line4}
&&L(0,1)=L(0,1)\times L(0,1)\equiv {\cal L} (-1,-1).
\end{eqnarray}
\end{itemize}
\end{proposition}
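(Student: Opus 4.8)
The plan is to establish the factorization (\ref{line1}) first, since the three parts are then a case analysis on whether the factor slopes are invertible in the fields ${\mathbb Z}(p_i)$. I would parametrise the maximal line as $L(\rho,\sigma)=\{(r\rho,r\sigma)\mid r\in{\mathbb Z}(d)\}$ and decompose the running index $r$ by the Chinese remainder theorem, writing $r_i=r\;({\rm mod}\;p_i)$. The first coordinate $r\rho$ is read through the map (\ref{map2}) and the second coordinate $r\sigma$ through the map (\ref{map1}); applying the mixed multiplication rules (\ref{20C}) and (\ref{20B}) respectively, the point $(r\rho,r\sigma)$ corresponds in the $i$-th factor to $(\overline\rho_i r_i,\sigma_i r_i)$. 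As $r$ ranges over ${\mathbb Z}(d)$ the pair $(r_1,r_2)$ ranges independently over ${\mathbb Z}(p_1)\times{\mathbb Z}(p_2)$, so the $i$-th factor sweeps out exactly $L(\overline\rho_i,\sigma_i)$, which is (\ref{line1}).

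For part (1), with $\overline\rho_i\neq 0$ in the field ${\mathbb Z}(p_i)$, I would multiply the index $r_i$ by the unit $(\overline\rho_i)^{-1}$; since this is a bijection of ${\mathbb Z}(p_i)$ it relabels the line without changing it, giving $L(\overline\rho_i,\sigma_i)=L(1,\nu_i)$ with $\nu_i=(\overline\rho_i)^{-1}\sigma_i$. The hypothesis $\overline\rho_i\neq0$ for both $i$ makes $\rho$ a unit of ${\mathbb Z}(d)$, so the identical argument in ${\mathbb Z}(d)$ gives $L(\rho,\sigma)=L(1,\rho^{-1}\sigma)$. It then remains to identify the slope $\rho^{-1}\sigma$ with $\mu^{-1}\nu$, where $\mu=p_1+p_2$ and $\nu=\nu_1s_1+\nu_2s_2$; this is the heart of the statement.

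I expect this slope identity to be the main obstacle, because it is where the deliberate use of two different CRT maps on the two coordinates must be reconciled. I would verify $\rho^{-1}\sigma=\mu^{-1}\nu$ by comparing both sides modulo $p_1$ and modulo $p_2$. Using the idempotent relations (\ref{20A}) one finds $\nu\equiv\nu_i\;({\rm mod}\;p_i)$ and, with $\mu\equiv p_2\;({\rm mod}\;p_1)$, $\mu\equiv p_1\;({\rm mod}\;p_2)$ together with the explicit inverse $\mu^{-1}=p_2^{-1}s_1+p_1^{-1}s_2$ from (\ref{135}), that $\mu^{-1}\nu\equiv p_2^{-1}\nu_1\;({\rm mod}\;p_1)$ and $\mu^{-1}\nu\equiv p_1^{-1}\nu_2\;({\rm mod}\;p_2)$. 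On the other side $\rho^{-1}\sigma\equiv(\rho_i)^{-1}\sigma_i=t_i(\overline\rho_i)^{-1}\sigma_i=t_i\nu_i\;({\rm mod}\;p_i)$, where I used $\overline\rho_i=\rho_i t_i$. The two agree precisely because the defining relations $t_ir_i\equiv1\;({\rm mod}\;p_i)$ with $r_1=p_2$, $r_2=p_1$ from (\ref{20}) give $t_1\equiv p_2^{-1}\;({\rm mod}\;p_1)$ and $t_2\equiv p_1^{-1}\;({\rm mod}\;p_2)$. Everything here is bookkeeping, but it must be done with care about which map governs which coordinate.

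Finally, parts (2) and (3) follow by letting the appropriate factor slopes vanish and running the reconstruction backwards. If $\overline\rho_1=0$ (with $\sigma_1\neq0$ for maximality) the first factor is $L(0,\sigma_1)=L(0,1)$, matching the convention $\nu_1=-1$, while the second reduces to $L(1,\nu_2)$ as in part (1). Reconstructing the unfactorized parameters through the inverse maps gives $\rho=\overline\rho_1 r_1+\overline\rho_2 r_2=p_1$ and $\sigma=\sigma_1 s_1+\sigma_2 s_2=s_1+s_2\nu_2$, which is (\ref{line3}); the case $\overline\rho_2=0$ is symmetric and yields (\ref{line30}). Part (3) is the doubly degenerate case $\overline\rho_1=\overline\rho_2=0$, i.e. $\rho=0$, where both factors are $L(0,1)$ and the claim $L(0,1)={\cal L}(-1,-1)$ is immediate.
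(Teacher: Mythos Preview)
Your argument is correct, and it takes a genuinely different route from the paper's own proof. The paper does not work with the running parameter $r$ at all; instead it writes each factor line as a symplectic image of the vertical axis, $L(1,\nu_i)={\cal S}(0,-1|1,\nu_i)L(0,1)$, and then invokes the already established factorization of symplectic matrices $Sp(2,{\mathbb Z}(d))\cong Sp(2,{\mathbb Z}(p_1))\times Sp(2,{\mathbb Z}(p_2))$ from Eqs.~(\ref{135}) and (\ref{dfg}) to identify the tensor product of these symplectic maps with the single map $S(0,-\mu^{-1}|\mu,\nu)$ (resp.\ $S(s_1,-s_2p_1^{-1}|p_1,s_1+\nu_2 s_2)$) acting on $L(0,1)$ in ${\cal G}(d)$; Eq.~(\ref{efb}) then reads off the unfactorized line. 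Your approach instead establishes the point-set identity~(\ref{line1}) directly from the two CRT maps and then checks the slope identity $\rho^{-1}\sigma=\mu^{-1}\nu$ by hand modulo each $p_i$, using only the arithmetic in Eqs.~(\ref{20}) and (\ref{20A}). The paper's route is shorter in context because it recycles the symplectic factorization that is needed anyway for the WMUB states (remark~\ref{zxd}), and it makes the parallelism between bases and lines structurally transparent. Your route is more self-contained: it does not rely on the symplectic machinery at all, it makes explicit why the \emph{mixed} use of the maps (\ref{map1}) and (\ref{map2}) on the two coordinates is exactly what forces the specific constant $\mu=p_1+p_2$ to appear, and the same bookkeeping handles parts (2) and (3) without needing separate symplectic identities like~(\ref{dfg}).
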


\begin{proof}
In all cases we show that the sets of points in the two sides are identical.
\begin{itemize}

\item[(1)]

\begin{eqnarray}
L(1,\nu_1)\times L(1,\nu_2)={\cal S}(0,-1|1,\nu _1)L(0,1)\times {\cal S}(0,-1|1,\nu _2)L(0,1)
\end{eqnarray}
Using Eq.(\ref{135}), and the fact that $L(0,1)\times L(0,1)$ is the line $L(0,1)$ in ${\cal G}(d)$, we get
\begin{eqnarray}
S(0,-\mu ^{-1}|\mu,\nu)L(0,1)=L(1,\mu ^{-1}\nu)
\end{eqnarray}
with the parameters given in Eq.(\ref{135}).
We used here Eq.(\ref{efb}).

\item[(2)]
\begin{eqnarray}
&&L(0,1)\times S(0,-1|1,\nu _2)L(0,1)=S(\kappa, \lambda|\mu,\nu)L(0,1)=
L(p_1, s_1+\nu _2s_2 )\nonumber\\
&&\kappa=s_1;\;\;\;\;
\lambda=-s_2p_1^{-1};\;\;\;\;
\mu=p_1;\;\;\;\;
\nu=s_1+\nu_2s_2
\end{eqnarray}
We used here Eq.(\ref{dfg}).
Comments analogous to remark \ref{zxd} are also valid for the lines.
\item[(3)]
This is straightforward.
\end{itemize}
\end{proof}
Therefore in ${\cal L}(\nu _1,\nu _2)$ the $\nu _i=-1,0,...,p_i-1$.
There are $\psi (d)=(p_1+1)(p_2+1)$
such lines through the origin , where $\psi(d)$ is the Dedekind $\psi$-function.
An example of this factorization for ${\cal G}(21)$ ($p_1=3$ and $p_2=7$) is given in table \ref{t1}.

Ref. \cite{W2} has shown that there exists a bijective map (duality) between the lines in ${\cal G}(d)$ and the weak mutually unbiased bases 
in $H(d)$ as follows:
\begin{eqnarray}\label{aaa}
{\cal B}(\nu _1, \nu _2)\;\leftrightarrow\; {\cal L} (\nu _1, \nu _2).
\end{eqnarray}
The finite geometry is non-near-linear geometry. The common points between two lines are described in the following proposition:
\begin{proposition}\label{1234}
Two maximal lines $L(1,\mu ^{-1}\nu)={\cal L}(\nu _1,\nu _2)$ and $L(1,\mu ^{-1}\nu')={\cal L}(\nu _1',\nu _2')$ (where $\nu=\nu_1s_1+\nu _2 s_2$ and $\nu'=\nu_1's_1+\nu _2' s_2$
through the origin, have in common $r(\nu _1,\nu _2|\nu _1',\nu _2')$ points where 
where $r(\nu _1,\nu _2|\nu _1',\nu _2')$ has been given in Eq.(\ref{200}).
\end{proposition}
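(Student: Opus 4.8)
The plan is to exploit the factorization of the lines established in Eq.(\ref{line1}) and the preceding proposition, which reduces the intersection problem over the ring ${\mathbb Z}(d)$ to two independent intersection problems over the fields ${\mathbb Z}(p_1)$ and ${\mathbb Z}(p_2)$. The starting observation is that, under the pair of maps in Eqs.(\ref{map1}),(\ref{map2}), a point $(\rho,\sigma)$ decomposes as $(\overline\rho_1,\sigma_1)\times(\overline\rho_2,\sigma_2)$, and a point lies on ${\cal L}(\nu_1,\nu_2)$ if and only if each projection $(\overline\rho_i,\sigma_i)$ lies on the corresponding factor line in ${\mathbb Z}(p_i)\times{\mathbb Z}(p_i)$ (namely $L(1,\nu_i)$ for $\nu_i\in\{0,\dots,p_i-1\}$, and $L(0,1)$ for $\nu_i=-1$). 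Since the two lines under comparison are distinct, at least one of $\nu_1\ne\nu_1'$, $\nu_2\ne\nu_2'$ holds, which is exactly the setting of the three cases in Eq.(\ref{200}).

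Next I would count the common points within a single factor ${\mathbb Z}(p_i)\times{\mathbb Z}(p_i)$, using that ${\mathbb Z}(p_i)$ is a field. If $\nu_i=\nu_i'$ the two factor lines coincide and share all $p_i$ points. If $\nu_i\ne\nu_i'$, I distinguish the generic subcase (both factor lines of the form $L(1,\cdot)$), where a common point satisfies $\overline\rho_i(\nu_i-\nu_i')=0$, and since $\nu_i-\nu_i'\ne0$ is invertible this forces $\overline\rho_i=0$ and hence $\sigma_i=0$, leaving only the origin; in the remaining subcase, involving $L(0,1)$, membership forces $\overline\rho_i=0$ and again only the origin survives. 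Thus in every factor the count is $p_i$ when $\nu_i=\nu_i'$ and $1$ when $\nu_i\ne\nu_i'$.

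Finally I would combine the two factor counts multiplicatively. For $\nu_1\ne\nu_1'$ and $\nu_2\ne\nu_2'$ the product is $1\cdot1=1$; for $\nu_1=\nu_1'$, $\nu_2\ne\nu_2'$ it is $p_1\cdot1=p_1$; and for $\nu_1\ne\nu_1'$, $\nu_2=\nu_2'$ it is $1\cdot p_2=p_2$. These are precisely the three values of $r(\nu_1,\nu_2|\nu_1',\nu_2')$ listed in Eq.(\ref{200}), completing the count.

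The step I expect to be the main obstacle, although it is largely secured by the preceding proposition, is justifying that the intersection itself factorizes, i.e., that the common points of the two product lines are exactly the product of the common points in each factor. This rests on the set identity $(A\times C)\cap(B\times D)=(A\cap B)\times(C\cap D)$ together with the fact that the decomposition of Eqs.(\ref{map1}),(\ref{map2}) is a cardinality-preserving bijection, and, underlying it, that as $r$ ranges over ${\mathbb Z}(d)$ the residues $r\,({\rm mod}\,p_1)$ and $r\,({\rm mod}\,p_2)$ range \emph{independently} over the two prime fields by the Chinese remainder theorem (compatibly with the scaling $(\overline\rho_i,\sigma_i)\mapsto(r_i\overline\rho_i,r_i\sigma_i)$ that generates the lines, via Eqs.(\ref{20A})). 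Once this decoupling is in place, the per-factor count is an immediate consequence of ${\mathbb Z}(p_i)$ being a field, and assembling the three cases is pure bookkeeping.
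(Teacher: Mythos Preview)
Your argument is correct, but it is organized differently from the paper's own proof. The paper stays in ${\mathbb Z}(d)$: a common point of $L(1,\mu^{-1}\nu)$ and $L(1,\mu^{-1}\nu')$ has the form $(\lambda,\lambda\mu^{-1}\nu)=(\lambda,\lambda\mu^{-1}\nu')$, which reduces to the single ring equation
\[
\lambda\bigl[(\nu_1-\nu_1')s_1+(\nu_2-\nu_2')s_2\bigr]=0\ (\mathrm{mod}\ d),
\]
and then the idempotent relations of Eq.(\ref{20A}) (in particular $p_2s_2=0$, $p_1s_1=0$) are used to count the solutions $\lambda$ in each of the three cases. You, by contrast, push the problem through the line factorization of Eq.(\ref{line1}) and the preceding proposition, count intersections separately in the two fields ${\mathbb Z}(p_i)$, and recombine via $(A\times C)\cap(B\times D)=(A\cap B)\times(C\cap D)$ together with the bijectivity of the CRT maps. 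Both routes rest on the Chinese remainder theorem; the paper invokes it through the idempotents $s_i$ inside ${\mathbb Z}(d)$, whereas you invoke it through the product decomposition of the lines. Your version has the advantage that it handles the $\nu_i=-1$ factor lines $L(0,1)$ uniformly and makes the multiplicativity of the count immediate, at the cost of relying on the factorization proposition; the paper's version is shorter and entirely self-contained for the specific parameterization $L(1,\cdot)$ in the statement.
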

\begin{proof}
The common points in the two lines should satisfy the relation
\begin{eqnarray}
(\lambda, \lambda \mu ^{-1}\nu)=(\lambda, \lambda \mu ^{-1}\nu')\;\;\rightarrow\;\;\lambda [(\nu_1-nu')s_1+(\nu _2-\nu _2') s_2]=0.
\end{eqnarray}
We first assume that $\nu _1\ne \nu _1'$ and $\nu _2\ne \nu _2'$. In this case $(\nu_1-\nu _1')s_1+(\nu _2-\nu _2') s_2$ is always different from zero, because the map
of Eq.(\ref{map1}) is bijective (and $0\leftrightarrow (0,0)$). Therefore in this case $\lambda =0$.

We next consider the case $\nu _1= \nu _1'$ and $\nu _2\ne \nu _2'$.
In this case any $\lambda$ which is multiple of $p_2$ gives a solution, because $p_2s_2=0$ (Eq.(\ref{20A})).
Therefore there are $p_1$ values of $\lambda$ which lead to common points.

The case $\nu _1 \ne \nu _1'$ and $\nu _2= \nu _2'$ is analogous to the above .
\end{proof}

Table \ref{t1} shows explicitly this duality for the case $d=21$.
In the present paper we show that there is another bijective map between these two sets and the set of zeros, in an analytic representation approach 
to weak mutually unbiased bases.

\begin{example}\label{exa}
We give an example of two lines through the origin in ${\cal G}(21)$, which have three points in common.
The lines $L(1,8)={\cal L} (2,3)$ and $L(1,11)={\cal L} (2,5)$ have in common the three points
$(0,0)$, $(7,14)$, $(14,7)$, and they are shown in Fig.\ref{f1}. This example shows that our geometry is a non-near-linear geometry.

The analogue of this in terms of bases is the ${\cal B} (2,3)$ and ${\cal B} (2,5)$.
In this case
\begin{eqnarray}
({\cal B}(2,3), {\cal B}(2,5))=\frac{3}{21}\;\;{\rm or}\;\;0.
\end{eqnarray}

Analogous example for two lines of zeros in ${\mathfrak Z}(21)$ is given later.

\end{example}

\section{Analytic representation of the weak mutually unbiased bases}

We first present a lemma which is needed in the proof of the proposition below.
\begin{lemma}
\begin{alignat}{1}\label{27y}
\prod _i\omega [\phi(\overline m_i,\overline k_i, \nu _i)]=\omega [\mu ^{-1}\phi(m,k, \nu)];\;\;\;\;\;\mu ^{-1}=p_2^{-1}s_1+p_1^{-1}s_2\;({\rm mod} \;d).
\end{alignat}
where $\phi(m,j,n)=-jm+2^{-1}\nu j^2$ (see Eq.(\ref{3cv})).
\end{lemma}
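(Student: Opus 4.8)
The plan is to split the phase $\phi(m,k,\nu)=-km+2^{-1}\nu k^2$ into its linear and quadratic parts and to handle each with the single identity Eq.(\ref{20D}), which converts a product $\omega(xy)$ into $\omega_1(x_1\overline y_1)\,\omega_2(x_2\overline y_2)$, where $x$ is decomposed through the map of Eq.(\ref{map1}) (so $x_i=x\,({\rm mod}\,p_i)$) and $y$ through the map of Eq.(\ref{map2}) (so $\overline y_i=y t_i$). Since the statement concerns only the phases, it suffices to exhibit, for each term, an element $x\in{\mathbb Z}(d)$ whose residues $x_i$ are exactly the factors appearing in $\phi(\overline m_i,\overline k_i,\nu_i)=-\overline k_i\overline m_i+2^{-1}\nu_i\overline k_i^{\,2}$, while always taking $y=k$ so that $\overline y_i=\overline k_i$.

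For the linear term I would take $x=\mu^{-1}m$ and $y=k$. Since $\mu=p_1+p_2$ gives $\mu^{-1}\equiv p_2^{-1}\,({\rm mod}\,p_1)$ and $\mu^{-1}\equiv p_1^{-1}\,({\rm mod}\,p_2)$, while the constants of Eq.(\ref{20}) give $t_1\equiv p_2^{-1}\,({\rm mod}\,p_1)$ and $t_2\equiv p_1^{-1}\,({\rm mod}\,p_2)$, the residues of $\mu^{-1}m$ are precisely $x_i=\overline m_i=mt_i$. Feeding $x=\mu^{-1}m$, $y=k$ into Eq.(\ref{20D}) yields $\omega(\mu^{-1}mk)=\omega_1(\overline m_1\overline k_1)\,\omega_2(\overline m_2\overline k_2)$, which (up to the overall sign carried linearly) is the linear part of the asserted identity.

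For the quadratic term I would take $x=2^{-1}\mu^{-1}\nu k$ and again $y=k$. Here $\nu=\nu_1s_1+\nu_2s_2$ together with $s_1\equiv1,\ s_2\equiv0\,({\rm mod}\,p_1)$ and $s_1\equiv0,\ s_2\equiv1\,({\rm mod}\,p_2)$, all read off from Eq.(\ref{20A}), gives $\nu\equiv\nu_i\,({\rm mod}\,p_i)$; writing $k\equiv\overline k_i t_i^{-1}\,({\rm mod}\,p_i)$ one then checks that the residues of $x$ are exactly $x_i=2^{-1}\nu_i\overline k_i$. Eq.(\ref{20D}) with $y=k$ now gives $\omega(2^{-1}\mu^{-1}\nu k^2)=\omega_1(2^{-1}\nu_1\overline k_1^{\,2})\,\omega_2(2^{-1}\nu_2\overline k_2^{\,2})$. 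Multiplying the linear and quadratic identities reproduces $\prod_i\omega_i[\phi(\overline m_i,\overline k_i,\nu_i)]=\omega[\mu^{-1}\phi(m,k,\nu)]$.

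The only delicate point is the bookkeeping of the two maps: the position labels $m,k$ are carried by the map of Eq.(\ref{map2}), whereas the factor $\mu^{-1}$ and the slope $\nu$ supply the complementary residues, so that the residue computations for $x$ must land exactly on $\overline m_i$ and on $2^{-1}\nu_i\overline k_i$. I expect the quadratic term to be the main obstacle, because there the hidden cancellation $p_i\cdot p_i^{-1}$ inside $k\equiv\overline k_i t_i^{-1}$ is precisely what makes the single factor $\mu^{-1}$ distribute correctly over $k^2$ and forces the $2^{-1}$ to survive unchanged in each subsystem; once this residue matching is verified, the remainder is a direct application of Eq.(\ref{20D}).
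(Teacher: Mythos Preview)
Your argument is correct and follows essentially the same route as the paper: both reduce the identity to the CRT arithmetic $\mu^{-1}\equiv t_i\pmod{p_i}$ and $\nu\equiv\nu_i\pmod{p_i}$, so that the linear and quadratic parts of $\mu^{-1}\phi(m,k,\nu)$ match $\phi(\overline m_i,\overline k_i,\nu_i)$ componentwise. The only cosmetic difference is packaging: the paper writes out the two $\mathbb{Z}(d)$ identities $km\mu^{-1}=\overline m_1\overline k_1 p_2+\overline m_2\overline k_2 p_1$ and $\mu^{-1}\nu k^2=\nu_1\overline k_1^{\,2}p_2+\nu_2\overline k_2^{\,2}p_1$ directly (using Eq.(\ref{20A})), whereas you obtain the same conclusion by invoking the already-established factorization Eq.(\ref{20D}) with a judicious choice of $x$ and $y$.
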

\begin{proof}
We use Eqs.(\ref{20A}) to prove that
\begin{equation}
km\mu ^{-1}=\overline m_1 \overline k_1 p_2+\overline m_2 \overline k_2 p_1;\;\;\;\;\;\nu_1({\overline k}_1)^2d_2+\nu_2({\overline k}_2)^2d_1=\mu ^{-1}\nu k^2.
\end{equation}
From these relations follows Eq.(\ref{27y}).
\end{proof}

We have explained earlier that Theta functions are Gaussian functions wrapped on a circle.
Symplectic transformations on Gaussian functions in a real line, give Gaussian functions. 
The proposition below proves an analogous statement for Theta functions.
This is needed later in order to prove that the zeros of the analytic representation of the state $\ket{{\cal X}(\nu _1,\nu _2);\overline m _1,\overline m _2}$
are on a straight line.
\begin{proposition}\label{1v9}
The analytic representation (defined in Eq.(\ref{aaa1})) of the state $\ket{{\cal X}(\nu _1,\nu _2);\overline m _1,\overline m _2}$
where $\nu _i=-1,...,p_i-1$ and $\overline m _i\in {\mathbb Z}(p_i)$, is given by:
\begin{itemize}
\item[(1)]
in the case $\nu _i=0,...,p_i-1$
\begin{eqnarray}\label{cvb}
&&|{\cal X}(\nu _1,\nu _2);\overline m_1,\overline m_2\rangle \quad \rightarrow \quad G(z)=\pi ^{-1/4}\exp \left (-\frac{ \pi}{d}z^2\right)
\Theta_3 (u; \tau)\nonumber\\
&&\tau=\frac{i-\nu\mu ^{-1}(d+1)}{d};\;\;\;\;\;u=-\pi \mu ^{-1}\left (\frac{\overline m_1}{p_1}+\frac{\overline m_2}{p_2}\right )+i\frac{\pi z}{p_1p_2}\nonumber\\
&&\nu=\nu _1s_1+\nu _2 s_2
\end{eqnarray}
where $\mu ^{-1},s_i$ are constants given in Eqs.(\ref{135}),(\ref{20}).
\item[(2)]
in the case $\nu _1=-1$ and $\nu _2=0,...,p_2-1$
\begin{eqnarray}\label{ggg}
 &&|{\cal X}(-1,\nu _2);\overline m_1,\overline m_2\rangle \quad \rightarrow \quad G(z)=\pi^{-1/4}\exp\left( \frac{-\pi p_2w^2}{p_1}\right) \Theta_3 (u;\tau)\nonumber\\
&&\tau=\frac{-\nu _2(p_2+1)+ip_1}{p_2};\;\;\;\;u=-\frac{\pi \overline m_2}{p_2}+i\pi w
;\;\;\;\;w=\frac{ z}{p_2}- \overline m_1.
\end{eqnarray}

Analogous result holds in the case $\nu _1=0,...,p_1-1$ and $\nu _2=-1$
\item[(3)]
in the case $\nu _1=\nu _2 =-1$
\begin{eqnarray}\label{cvb10}
|{\cal X}(-1,-1);m\rangle=|X;m\rangle \quad \rightarrow \quad G(z)=\pi ^{-1/4} \Theta_3\left[ \frac{\pi m}{ d}-z\left (\frac {\pi}{d}\right ) ; \frac{i}{d}\right]
\end{eqnarray}

\end{itemize}
\end{proposition}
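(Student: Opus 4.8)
The plan is to reduce everything to the definition (\ref{aaa1}) of $G(z)$ and then to recognise the resulting lattice sum as a single Theta function by means of the modular transformation (\ref{y}). Case (3) is immediate: $\ket{X;m}$ has position coefficients $g_k=\delta_{k,m}$, so substituting directly into (\ref{aaa1}) reproduces (\ref{cvb10}). For case (1) I would first read off the position-basis coefficients of $\ket{\mathcal X(\nu_1,\nu_2);\overline m_1,\overline m_2}$. Applying Lemma \ref{l} in each factor $H(p_i)$ (where $p_i$ is prime) and combining the two factor phases by the product-of-phases identity (\ref{27y}) gives $g_k=\tfrac{1}{\sqrt d}\,\omega[\mu^{-1}\phi(m,k,\nu)]$ with $\phi(m,k,\nu)=-km+2^{-1}\nu k^2$ and $\nu=\nu_1s_1+\nu_2s_2$. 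The one point requiring care here is that $2^{-1}$ is an element of ${\mathbb Z}(d)$; since $d$ is odd I would lift it to the integer $(d+1)/2$, and it is precisely this lift that produces the factor $(d+1)$ appearing in $\tau$.

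Substituting $g_k^*$ into (\ref{aaa1}) and expanding each $\Theta_3$ by (\ref{theta}) yields a double sum over $k\in{\mathbb Z}(d)$ and $n\in{\mathbb Z}$. This cannot be reindexed directly, because (with the integer lift of $2^{-1}$ and $d$ odd) the full $k$-dependent phase is periodic mod $d$ and hence does not decay in $k$. The key manoeuvre is to apply the Jacobi imaginary transformation (\ref{y}) to each Theta factor \emph{first}: this replaces the modular parameter $i/d$ by $id$, extracts the factor $\exp[-\tfrac{\pi}{d}(k-z)^2]$ (its $\sqrt d$ cancelling the $1/\sqrt d$ in the coefficients), and turns the $n$-sum into one with summand $\exp[-\pi d N^2+2\pi N(k-z)]$. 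After this the $k$-Gaussian decays, and the substitution $K=k-dN$ is a bijection ${\mathbb Z}(d)\times{\mathbb Z}\to{\mathbb Z}$ under which the $k$-Gaussian and the $N$-cross term fuse into the single Gaussian $\exp[-\tfrac{\pi}{d}(K-z)^2]$, while the mod-$d$ periodicity of the phase lets me replace $k$ by $K$ throughout the phase.

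This collapses the double sum to a single sum over $K\in{\mathbb Z}$. Pulling out $\exp(-\tfrac{\pi}{d}z^2)$ and matching the remaining $K^2$ and $K$ coefficients against $\Theta_3(u;\tau)=\sum_K\exp[i\pi\tau K^2+i2Ku]$ gives exactly $\tau=\tfrac{i-\nu\mu^{-1}(d+1)}{d}$ together with a value of $u$ equal to minus the expression stated in (\ref{cvb}); since $\Theta_3$ is even in $u$ this is the claimed result, and I would use $\tfrac{\overline m_1}{p_1}+\tfrac{\overline m_2}{p_2}=\tfrac{m}{d}$ (from (\ref{map2}),(\ref{20})) to bring $u$ to the stated form. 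For case (2) I would run the same argument on the factorized state $\ket{X_1;\overline m_1}\otimes\ket{X_2(\nu_2);\overline m_2}$: the position eigenstate in $H(p_1)$ forces $\overline k_1=\overline m_1$, so the $k$-sum collapses to a sum over $\overline k_2\in\{0,\dots,p_2-1\}$ only. The same Jacobi-plus-reindex step then runs with effective period $p_2$, producing $\tau=\tfrac{-\nu_2(p_2+1)+ip_1}{p_2}$ (the $(p_2+1)$ from lifting $2^{-1}\bmod p_2$) and shifting $z$ by $\overline m_1$, which is exactly the variable $w=\tfrac{z}{p_2}-\overline m_1$ in (\ref{ggg}).

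I expect the main obstacle to be the step just described: realising that one must apply the modular transformation (\ref{y}) \emph{before} attempting any reindexing, and then finding the bijection $K=k-dN$ that fuses the finite $k$-sum with the infinite $n$-sum into one Theta function. The bookkeeping of the quadratic phase through this reindexing — in particular verifying its mod-$d$ periodicity, which relies on $d$ being odd — is the delicate part, whereas the concluding identification of $u$ and $\tau$ is routine once the sum is in single-Theta form.
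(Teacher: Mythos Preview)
Your proposal is correct and follows essentially the same route as the paper's proof: obtain the position coefficients via Lemma~\ref{l} and the phase-product identity (\ref{27y}), insert them into (\ref{aaa1}), apply the Jacobi imaginary transformation (\ref{y}) to each $\Theta_3$, and then fuse the finite sum over $k\in{\mathbb Z}(d)$ with the infinite sum over $n\in{\mathbb Z}$ via a bijection to ${\mathbb Z}$. The paper uses the reindexing $N=nd-j$, which differs from your $K=k-dN$ only by a sign; with the paper's convention the linear term in $u$ comes out with the stated sign directly, so the appeal to the evenness of $\Theta_3$ is unnecessary but harmless.
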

\begin{proof}
\mbox{}
\begin{itemize}
\item[(1)]
Using Eq.(\ref{3cv}) with $\mu=1$ we get
 \begin{eqnarray}
 |X( \nu_i);\overline{m}_1\rangle = \frac{1}{ \sqrt{p_i}} \sum_{k_i=0}^{p_i-1} \omega[\phi ( \overline{m}_i,\overline{k}_i,\nu _i)]|X;\overline{k}_i\rangle
 \end{eqnarray}
\ \\
Therefore
\begin{alignat}{1}
|{\cal X}(\nu _1,\nu _2);\overline m_1,\overline m_2\rangle &=\sum_j|X;j\rangle\langle X;j|{\cal X}(\nu _1,\nu _2);\overline m_1,\overline m_2\rangle\nonumber \\
&=\sum_j|X;j\rangle[ \langle X_1;\overline j_1|\otimes \langle X_2;\overline j_2|]| {\cal X}(\nu _1, \nu _2);\overline m_1,\overline m_2\rangle\nonumber\\
&=\sum_j|X;j\rangle\left[\langle X_1;\overline j_1|X_1(\nu _1);\overline m_1\rangle\right] \left[\langle X_2;\overline j_2|X_2(\nu _2);\overline m_2\rangle\right] \nonumber \\
&=\sum_j\frac{1}{\sqrt d}\prod _i\omega [\phi(\overline m_i,\overline j_i, \nu _i)]|X;j\rangle .
\end{alignat}
We then use Eq.(\ref{27y}) and we get
\begin{eqnarray}
|{\cal X}(\nu _1,\nu _2);\overline m_1,\overline m_2\rangle =\sum_j\frac{1}{\sqrt d}\omega [\mu ^{-1}\phi(m,j, \nu)]|X;j\rangle .
\end{eqnarray}
Using lemma \ref{l} and Eq.(\ref{aaa1}), we represent $|X(\nu);m\rangle$ with the sum
\begin{eqnarray}\label{cvb34}
|{\cal X}(\nu _1,\nu _2);\overline m_1,\overline m_2\rangle \quad \rightarrow \quad \frac{ \pi^{-1/4}}{ \sqrt{d}}\sum_{j} \omega [ -\mu ^{-1} \phi (m,j,\nu)] \Theta_3
  \left[ \frac{\pi j}{ d}-z\left (\frac {\pi}{d}\right ); \frac{i}{d}\right]
\end{eqnarray}
We next show that this sum is equal to the Theta function shown on the right hand side of Eq.(\ref{cvb}).
Using the property of Theta functions in Eq.(\ref{y}) and Eq.(\ref{theta}), we find that
\begin{eqnarray}
&&\Theta_3 \left[ \frac{ \pi j}{ d}-z\left (\frac { \pi}{d}\right ) ; \frac{i}{d}\right]=\sqrt{d}\exp\left[ \frac{-\pi j^2}{d} + 2 j\left( \frac{ \pi}{d}\right)z -\left( \frac{ \pi}{d}z^2\right)\right]
\Theta_3\left[-i\pi j + i\pi z; id\right]\nonumber\\&&=\sqrt{d}\exp\left[ \frac{-\pi j^2}{d} + 2 j\left( \frac{ \pi}{d}\right)z -\left( \frac{ \pi}{d}z^2\right)\right]
\sum_{n= -\infty}^{ \infty} \exp\left[-\pi d n^2 + 2n\pi j - 2n\pi z\right].
\end{eqnarray}
In this paper we consider the case of odd $d$ and then $2^{-1}=\frac{d+1}{2}$. Therefore we get
\begin{eqnarray}
&&\frac{ \pi^{-1/4}}{ \sqrt{d}}\sum_{j}  \omega [-\mu ^{-1}\phi (m,j,\nu)]  \Theta_3
  \left[ \frac{\pi j}{ d}-z\left (\frac {\pi}{d}\right ) ; \frac{i}{d}\right]\nonumber\\
 &&= \pi^{-1/4}\exp\left[ \frac{-\pi}{d}z^2\right] \sum_{n= -\infty}^{ \infty} \sum_{j=0}^{d-1} \exp\left[\frac{-\pi}{d}(-j +nd)^2\right] \nonumber\\
 && \exp\left[-\frac{i\pi\nu\mu ^{-1} (d+1)}{d}(-j +nd)^2\right] \exp\left[-\frac{2i\pi m\mu ^{-1}}{d}(-j +nd)\right] \nonumber\\
 && \exp\left[-2(-j+nd)\left( \frac{ \pi}{d}\right)z\right]
\end{eqnarray}
We now change variable into $N=nd-j$. Since $n$ takes all integer values and $0\le j\le d-1$, the variable $N$ takes all integer values.
Therefore the above sum becomes
\begin{eqnarray}
 &&\pi^{-1/4}\exp\left[ \frac{-\pi}{d}z^2\right] \sum_{N= -\infty}^{ \infty} \exp\left[ \frac{-\pi }{d}N^2  -\frac{i\pi\nu \mu ^{-1}(d+1)}{d}N^2 - \frac{2i\pi m\mu ^{-1} N}{d} - 2N\left( \frac{ \pi}{d}\right)z\right].
\end{eqnarray}
This is the result in Eq.(\ref{cvb}).

\item[(2)]
We first point out that
\begin{eqnarray}
|{\cal X}(-1,\nu_2);\overline{ m}_1,\overline{m}_2\rangle =\sum_{k} \delta(\overline{k}_1,\overline{m}_1) \omega( \phi( \overline{ m}_2,\overline{k}_2,\nu_2) ) |X;k\rangle.
\end{eqnarray}
where $k=\overline k_1 p_2+\overline k_2 p_1$. Summation over $k$ is equivalent to summation over both $\overline k_1, \overline k_2$.

Its analytic representation is
\begin{eqnarray}
&&|{\cal X}(-1,\nu _2);\overline m_1,\overline m_2\rangle \quad \rightarrow \quad \frac{ \pi^{-1/4}}{ \sqrt{d}}\sum_{\overline k _1} \sum_{\overline k _2}\delta(\overline{k}_1,\overline{m}_1)\omega( -\phi( \overline{m}_2,\overline{k}_2,\nu_2) ) \Theta_3 \left[ \frac{\pi k}{ d}-z\left (\frac {\pi}{d}\right ); \frac{i}{d}\right]\nonumber\\&&=
\frac{ \pi^{-1/4}}{ \sqrt{d}}\sum_{\overline k _2}\omega( -\phi( \overline{m}_2,\overline{k}_2,\nu_2) )
\Theta_3 \left[ \frac{\pi (\overline m_1p_2+\overline k_2 p_1)}{ d}-z\left (\frac {\pi}{d}\right ); \frac{i}{d}\right]\nonumber\\
&&=\frac{ \pi^{-1/4}}{ \sqrt{d}}\sum_{\overline k _2}\omega( -\phi( \overline{m}_2,\overline{k}_2,\nu_2) )
\sqrt{d}\exp\left[ \frac{-\pi (\overline{m}_1p_2 + \overline{k}_2p_1)^2}{d} + 2 (\overline{m}_1p_2 + \overline{k}_2p_1)\left( \frac{ \pi}{d}\right)z -\left( \frac{ \pi}{d}z^2\right)\right]\nonumber\\&&
\times \Theta_3\left[-i\pi (\overline{m}_1p_2 + \overline{k}_2p_1) + i\pi z; id\right]\nonumber\\
 &&= \pi^{-1/4}\exp\left[ \frac{-\pi(\overline m_1p_2-z)^2}{p_1p_2}\right] \sum_{n= -\infty}^{ \infty} \sum_{ \overline{k}_2}
 \exp\left[-\frac{i\pi\nu _2 (p_2+1)}{p_2}(np_2- \overline{k}_2 )^2-\frac{2i\pi\overline{m}_2}{p_2}(np_2 - \overline{k}_2 )\right ] \nonumber\\
 && \times \exp\left[-\frac{\pi p_1}{p_2}(np_2-\overline k_2)^2+2\pi\left(\overline m_1-\frac{z}{p_2}\right)(np_2-\overline k_2)\right]
\end{eqnarray}

We now change variable into $N=np_2- \overline{k}_2 $, and we get the result in Eq.(\ref{ggg}).

\item[(3)]
Eq.(\ref{cvb10}) is obvious from the definition of the analytic representation.
\end{itemize}
\end{proof}
\begin{remark}
The $\tau$ in Eq.(\ref{cvb}) contains $\nu \mu ^{-1}$ which is an integer modulo $d$.
Consequently, $\tau$ is defined up to an integer multiple of $d+1$.
Since $d+1$ is an even integer, the $\Theta _3$ does not change (Eq.(\ref{theta})).
\end{remark}

Below we consider the states in WMUB $|{{\cal X}(\nu _1,\nu _2);\overline m_1,\overline m_2} \rangle$, and 
using proposition \ref{1v9}, we show that the zeros of their analytic representation are 
on a straight line.
\begin{proposition}\label{rrr}
The $d$ zeros of the analytic representation of the vector
$|{{\cal X}(\nu _1,\nu _2);\overline m_1,\overline m_2} \rangle$
where $\nu _i=-1,...,p_i-1$ and $\overline m_i\in {\mathbb Z}(p_i)$, are on a straight line and they are given by:
\begin{itemize}
\item[(1)]
in the case $\nu_i =0,...,p_i-1$ for $i=1,2$
\begin{eqnarray}\label{MM1}
&&\zeta (\nu_1,\nu _2 ;\overline m_1,\overline m_2; N)=\alpha(1-i\beta)+\gamma\nonumber\\
&&\alpha=N-\frac{1}{2};\;\;\;\;\beta=-\mu^{-1} \nu(d+1);\;\;\;\;\;\gamma=-idM+i\frac{d}{2}-i\mu ^{-1}m\nonumber\\
&&N=K+1,...,K+d;\;\;\;\;\;\;m={\overline m}_1p_2+{\overline m}_2p_1;\;\;\;\;\;\nu=\nu_1s_1+\nu_2 s_2
\end{eqnarray}
where $\mu ^{-1}, s_i$ are constants given in Eqs.(\ref{135}), (\ref{20}). 
Appropriate choices of the `winding integers' $K,M$, locate the zeros in the desirable cell.

\item[(2)]
in the case $\nu _1=-1$ and $\nu _2=0,...,p_2-1$
\begin{eqnarray}\label{620}
&&\zeta (-1,\nu _2; \overline m_1,\overline m_2; N)=\alpha (p_1-i\beta ')+\gamma '\nonumber\\
&&\alpha=N-\frac{1}{2};\;\;\;\;\beta '=-\nu _2(1+p_2);\;\;\;\;\;\gamma'={\overline m}_1p_2-
i{\overline m}_2-ip_2\left( M -\frac{1}{2}\right)\nonumber\\
&&N=K_1+1,...,K_1+p_2;\;\;\;\;\;M=K_2+1,...,K_2+p_1;\;\;\;\;\;\;m={\overline m}_1p_2+{\overline m}_2p_1.
\end{eqnarray}
Appropriate choices of the `winding integers' $K_1, K_2$, locate the zeros in the desirable cell.
Similar result holds for the case $\nu _2=-1$ and $\nu _1=0,...,p_1-1$.
\item[(3)]
in the case $\nu _1=\nu _2=-1$
\begin{eqnarray}\label{619}
&&\zeta (-1, -1;\overline m_1,\overline m_2; N)=-i\alpha +\gamma '';\;\;\;\;
\gamma ''=m-Md +\frac{d}{2}+id;\;\;\;\;\;N=K+1,...,K+d\nonumber\\
&&\alpha=N-\frac{1}{2};\;\;\;\;\;\;m={\overline m}_1p_2+{\overline m}_2p_1
\end{eqnarray}
Appropriate choices of the `winding integers' $K,M$, locate the zeros in the desirable cell.

\end{itemize}

\end{proposition}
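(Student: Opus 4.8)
The plan is to reduce, in each of the three cases, the location of the zeros of $G(z)$ to the zeros of a single $\Theta_3$ function, and then to invert an affine relation between $z$ and the argument $u$ of that $\Theta_3$. First I would observe that in every case of Proposition \ref{1v9} the representation has the form $G(z)=(\textrm{Gaussian prefactor})\cdot\Theta_3(u;\tau)$, where the prefactor is $\exp(-\frac{\pi}{d}z^2)$ in case (1), $\exp(-\frac{\pi p_2w^2}{p_1})$ in case (2), and $1$ in case (3). Since a Gaussian never vanishes, the zeros of $G$ coincide exactly with the zeros of $\Theta_3(u(z);\tau)$. The zeros of $\Theta_3(\cdot;\tau)$ form the two-parameter lattice
$$u=\left(M-\tfrac12\right)\pi+\left(N-\tfrac12\right)\pi\tau,\qquad M,N\in\mathbb Z,$$
which generalises Eq.(\ref{eq1}) (the special case $\tau=i/d$). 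In each case $u$ is an affine, invertible function of $z$, the coefficient of $z$ being $i\pi/d$, $i\pi/p_2$, $-\pi/d$ respectively, so I solve $u(z)=(M-\tfrac12)\pi+(N-\tfrac12)\pi\tau$ for $z$ to obtain an explicit affine expression $z=z(M,N)$.

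The next step is the arithmetic simplification. In case (1), substituting $\tau=\frac{i-\nu\mu^{-1}(d+1)}{d}$ gives $d\tau=i+\beta$ with $\beta=-\mu^{-1}\nu(d+1)$, so that the $N$-dependent part becomes $-i(N-\tfrac12)d\tau=\alpha(1-i\beta)$ with $\alpha=N-\tfrac12$; the $\overline m_i$-dependent part collapses via $d(\frac{\overline m_1}{p_1}+\frac{\overline m_2}{p_2})=p_2\overline m_1+p_1\overline m_2=m$ (the map of Eq.(\ref{map2}) with $r_i$ from Eq.(\ref{20})) into $-i\mu^{-1}m$, producing exactly the $\gamma$ of Eq.(\ref{MM1}). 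Cases (2) and (3) are handled identically: writing $p_2\tau=\beta'+ip_1$ with $\beta'=-\nu_2(1+p_2)$ yields the direction $p_1-i\beta'$ of Eq.(\ref{620}), while in case (3) the bare $\Theta_3[\frac{\pi m}{d}-\frac{\pi z}{d};\frac{i}{d}]$ inverts directly into Eq.(\ref{619}). The residue relations in Eqs.(\ref{20A}),(\ref{20}) are what render all these substitutions consistent.

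Collinearity then follows from the affine structure: for fixed $M$ the map $N\mapsto z(M,N)$ is a first-degree polynomial in the real parameter $\alpha=N-\tfrac12$, so all these zeros lie on one straight line whose direction ($1-i\beta$, $p_1-i\beta'$, or $-i$) is independent of $M,N$. The step I expect to be the main obstacle is the bookkeeping that places exactly $d$ of these collinear points inside one cell ${\mathfrak S}$. The full zero set is a genuine two-dimensional lattice in the $z$-plane, generated by $(1,-\beta)$ together with the purely imaginary cell-shift $-id$; but on the torus ${\mathfrak S}$ the generator $-id$ acts trivially, so the zeros reduce to the cyclic set $\{\alpha(1-i\beta)+\gamma\}$, a closed geodesic of direction $1-i\beta$ carrying exactly $d$ points (since $\mu^{-1}\nu\in\mathbb Z$ forces $\beta\in\mathbb Z$, whence $d(1,-\beta)\equiv(0,0)$ on the torus). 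This count $d$ agrees with the constraint of Eq.(\ref{con}). The ``winding integers'' $K,M$ are then simply the choice of representatives lifting these $d$ torus points into the prescribed fundamental domain, which accounts for the admissible ranges $N=K+1,\dots,K+d$ and for the residual freedom in the constants (for instance the extra $+id$ appearing in $\gamma''$ of Eq.(\ref{619})).
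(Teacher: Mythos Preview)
Your proposal is correct and follows essentially the same route as the paper: in each case you use Proposition~\ref{1v9} to write $G(z)$ as a nonvanishing Gaussian prefactor times a single $\Theta_3(u;\tau)$, then solve the affine equation $u(z)=(M-\tfrac12)\pi+(N-\tfrac12)\pi\tau$ for $z$, which is exactly what the paper does (its proof simply writes down this equation and states that the claimed formulas follow). Your discussion of collinearity via the affine structure, and of the torus geometry underlying the count of $d$ zeros and the role of the winding integers, is more explicit than the paper's treatment, which leaves those points to the reader.
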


\begin{proof}
\begin{itemize}
\item[(1)]
From Eq.(\ref{cvb}) we see that $|{\cal X}(\nu_1,\nu_2);\overline m_1,\overline m_2\rangle$ is represented by a single Theta function. Therefore the zeros in the case $\nu =\nu_1s_1+\nu_2s_2 =0,...,d-1$ are:
\begin{eqnarray}\label{eq3}
 -\frac{\pi \mu^{-1}m}{ d}+i\zeta \left (\frac {\pi}{d}\right )=(2M-1)\frac{\pi}{2}+(2N-1)\frac{\pi \tau}{2}
\end{eqnarray}
where $M,N$ are integers, and $\tau$ is given in Eq.(\ref{cvb}).
From this we get the result of Eqs.(\ref{MM1}).
\item[(2)]
In Eq.(\ref{ggg}) $|{\cal X}(-1,\nu_2);\overline m_1,\overline m_2\rangle$ is represented by a single Theta function. Therefore the zeros in the case $\nu_2 =0,...,p_2-1$ are:
\begin{eqnarray}\label{eq2}
 -\frac{\pi \overline {m}_2}{ p_2}+i\zeta \left (\frac {\pi}{p_2}\right )- i\pi\overline{m}_1=(2M-1)\frac{\pi}{2}+(2N-1)\frac{\pi \tau}{2}
\end{eqnarray}
where $M,N$ are integers, and $\tau$ is given in Eq.(\ref{ggg}).
From this we get the result of Eqs.(\ref{620}).
\item[(3)]
in the case $\nu_1=\nu_2 =-1$, the zeros of the Theta function in Eq.(\ref{cvb10}) give
\begin{eqnarray}\label{eq1}
 \frac{\pi m}{ d}-\zeta \left (\frac {\pi}{d}\right )=(2M-1)\frac{\pi}{2}+(2N-1)\frac{i\pi}{2d}
\end{eqnarray}
and from this follows Eq.(\ref{619}).
\end{itemize}
\end{proof}
In $\zeta (\nu_1,\nu _2;\overline m_1,\overline m_2; N)$ we used the `factorized notation' for the zeros corresponding to the vector $|{\cal X}(\nu_1,\nu_2);\overline m_1,\overline m_2\rangle$.
The correspondence between the two notations is given in Eqs(\ref{np1}),(\ref{np2}),(\ref{np3}),(\ref{np4}) for the various vectors, and from this follows that in the zeros in the unfactorized notation are
\begin{eqnarray}
&&\zeta (\nu _1,\nu _2;\overline m_1,\overline m_2)= {\cal \zeta} '(1,\mu ^{-1}\nu ;m \mu ^{-1})\nonumber\\
&&\nu=\nu _1s_1+\nu _2s_2;\;\;\;\;\;;m={\overline m}_1p_2+{\overline m}_2p_1;\;\;\;\;\nu_i=0,...,p_i-1
\end{eqnarray}
and 
\begin{eqnarray}
&&\zeta(-1,\nu _2;\overline m_1,\overline m_2)=\zeta '(p_1, s_1+\nu _2s_2 ;m);\;\;\;\;\;m={\overline m}_1p_2+{\overline m}_2p_1\nonumber\\
&&\zeta(\nu _1,-1;\overline m_1,\overline m_2)=\zeta '(p_2, s_2+\nu _1s_1 ;m)\nonumber\\
&&\zeta(-1,-1;\overline m_1,\overline m_2)=\zeta '(0,1;m)
\end{eqnarray}
We refer to the following set of $d$ zeros
\begin{eqnarray}
{\cal Z}(\nu _1,\nu_2;\overline m_1,\overline m_2)=\{\zeta (\nu _1,\nu_2; \overline m_1,\overline m_2; N);\;N=1,...,d\};\;\;\;\nu_i\in {\mathbb Z}(p_i)
\end{eqnarray}
as the `line' of the $d$ zeros corresponding to $\ket{{\cal X}(\nu _1,\nu_2);\overline m_1,\overline m_2}$.
In the unfactorized notation this is
\begin{eqnarray}
&&{\cal Z}'(1,\mu ^{-1}\nu ;m )=\{{\cal \zeta} '(1,\mu ^{-1}\nu ; m; N);\;N=1,...,d\}\nonumber\\
&&{\cal Z}'(p_1, s_1+\nu _2s_2 ;m)=\{{\cal \zeta} '(p_1, s_1+\nu _2s_2 ; m; N);\;N=1,...,d\}\nonumber\\
&&{\cal Z}'(p_2, s_2+\nu _1s_1 ;m)=\{\zeta '(p_2, s_2+\nu _1s_1 ; m; N);\;N=1,...,d\}\nonumber\\
&&{\cal Z}'(0,1;m)=\{{\cal \zeta} '(0,1;m;N);\;N=1,...,d\}
\end{eqnarray}

\begin{proposition}
The $d^2$ zeros in the cell ${\mathfrak S}$, of all $d$ vectors in the basis ${\cal B}(\nu _1,\nu _2)$ are 
\begin{eqnarray}\label{62}
{\mathfrak z}(r,s)=(r+is)+\frac{1}{2}(1+i);\;\;\;\;\;r,s=0,...,d-1.
\end{eqnarray}
and they do not depend on $(\nu _1, \nu_2)$.
We denote as ${\mathfrak Z}(d)$ the lattice of these zeros.
\end{proposition}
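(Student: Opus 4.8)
The plan is to read off the explicit coordinates of the zeros from Proposition \ref{rrr} and to show that, once the $d$ vectors $\ket{{\cal X}(\nu _1,\nu _2);\overline m_1,\overline m_2}$ of the basis (labelled by $\overline m_i\in{\mathbb Z}(p_i)$, so that $m=\overline m_1p_2+\overline m_2p_1$ runs over ${\mathbb Z}(d)$) are collected together, their $d^2$ zeros fill exactly the square lattice ${\mathfrak z}(r,s)=(r+\tfrac12)+i(s+\tfrac12)$, and that all $\nu$-dependence drops out. The basic observation is that, for each vector, one Cartesian coordinate of its zeros is controlled by the zero-index $N$ while the other is controlled by the vector-label $m$, and that the resulting assignment (vector label, zero index) $\mapsto$ lattice site is a bijection onto the $d^2$ sites of ${\mathfrak S}$.

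First I would treat case (1), $\nu_i=0,\dots,p_i-1$, using Eq.(\ref{MM1}). Separating real and imaginary parts of $\zeta=\alpha(1-i\beta)+\gamma$ gives $\mathrm{Re}\,\zeta=\alpha=N-\tfrac12$ and $\mathrm{Im}\,\zeta=-\alpha\beta+\mathrm{Im}\,\gamma$. Because $\beta=-\mu^{-1}\nu(d+1)$ with $\mu^{-1}\nu$ an integer and $\tfrac{d+1}{2}\in{\mathbb Z}$ (here $d$ is odd), the product $\alpha\beta$ is an integer, so the fractional part $\tfrac12$ of each coordinate comes only from the $\alpha$ and from the $\tfrac{d}{2}$ sitting in $\gamma$. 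As $N$ runs over $d$ consecutive values, $\mathrm{Re}\,\zeta$ fills $\{r+\tfrac12\}$, one value per column. For $N$ fixed, reducing modulo $d$ shows that $\mathrm{Im}\,\zeta$ is an $m$-independent constant plus $-\mu^{-1}m$; since $\mu^{-1}$ is invertible in ${\mathbb Z}(d)$, as $m$ sweeps the $d$ vectors the residue $-\mu^{-1}m\ (\mathrm{mod}\ d)$ takes every value, so once the winding integer $M$ brings the point into ${\mathfrak S}$ the imaginary part fills $\{s+\tfrac12:\,s=0,\dots,d-1\}$. The crucial point is that $\nu$ enters $\mathrm{Im}\,\zeta$ only through the $m$-independent term $(N-\tfrac12)\mu^{-1}\nu(d+1)$, i.e.\ as a column-dependent shift that is washed out when $m$ runs over all of ${\mathbb Z}(d)$; hence the filled lattice is the same for every $\nu$.

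The remaining cases are structurally identical. In case (3), Eq.(\ref{619}), the roles are interchanged: $\mathrm{Im}\,\zeta=d-N+\tfrac12$ supplies the rows as $N$ varies, while $\mathrm{Re}\,\zeta=m+\tfrac{d}{2}-Md$ supplies the columns as $m$ varies (again a bijection onto ${\mathbb Z}(d)$), yielding the same lattice. In case (2), Eq.(\ref{620}), there are two counting integers, $N$ over $p_2$ values and $M$ over $p_1$ values; I would invoke the Chinese remainder bijections ${\mathbb Z}(p_2)\times{\mathbb Z}(p_1)\to{\mathbb Z}(d)$, namely $(N,\overline m_1)\mapsto Np_1+\overline m_1p_2$ for the real part and $(M,\overline m_2)\mapsto -p_2M-\overline m_2$ for the imaginary part, to conclude that columns and rows are each covered exactly once. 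These two groups of parameters are disjoint, so the joint map is a product of bijections and there are no collisions. Throughout, the oddness of $p_1,p_2$ (hence of $d$) is what supplies the common fractional part $\tfrac12$ from $\tfrac{d}{2}$ or $\tfrac{p_i}{2}$, and makes the $\nu_2$-dependent $\beta'$-terms integers that act only as row-independent shifts.

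The main obstacle is bookkeeping rather than conceptual: one must check that, after the winding integers are fixed to place each zero in ${\mathfrak S}$, the labels-to-sites map is a genuine bijection, so that precisely the $d^2$ points ${\mathfrak z}(r,s)$ appear once each. The heart of the statement, and what makes it non-trivial, is the $\nu$-independence: although Proposition \ref{rrr} shows that the individual lines of zeros have $\nu$-dependent slopes, the $\nu$-dependent contribution to each coordinate is constant along a column (or row) and is therefore absorbed once the vector-label sweeps the full residue system modulo $d$.
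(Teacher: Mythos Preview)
Your proposal is correct and follows essentially the same approach as the paper: both arguments use Proposition~\ref{rrr} case by case, observe that one Cartesian coordinate of the zeros is swept by the zero-index while the other is swept by the vector label (via the invertibility of $\mu^{-1}$ in case~(1) and the Chinese remainder bijections in case~(2)), and note that the $\nu$-dependent contributions are integer shifts that disappear once the label runs over a full residue system. You are more explicit than the paper about the half-integer fractional parts and about why the $\nu$-dependence is absorbed; the one slight imprecision is your phrase ``product of bijections'' in case~(2), since the imaginary part carries an $N$-dependent shift via $\beta'$, but you immediately acknowledge this as a row-independent shift, so the bijectivity argument goes through for each fixed $N$ exactly as in the paper.
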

\begin{proof}
We consider three cases:
\begin{itemize}
\item[(1)]
In the case $\nu_i =0,...,p_i-1$ for $i=1,2$ we use Eq. (\ref{MM1}).
$N$ takes all values $1,...,d$ in the real axis. For each $N$, the $i\mu ^{-1}m$ gives all required values $1,...,d$ in the imaginary axis.
We note that when $m$ takes all values in ${\mathbb Z}(d)$, the $\mu ^{-1}m$ also takes all values in ${\mathbb Z}(d)$, because $\mu ^{-1}$ is invertible.

\item[(2)]
In the case $\nu _1=-1$ and $\nu _2=0,...,p_2-1$, we use Eq.(\ref{620}).
$Np_1+{\overline m}_1p_2$ takes all values $1,...,d$ in the real axis.
Indeed, $Np_1$ gives the integer multiples of $p_1$ and ${\overline m}_1p_2$ gives the `in between' values.
It is important here that $p_2$ is an invertible element within ${\mathbb Z}(p_1)$.

For each $Np_1+{\overline m}_1p_2$, the $i(p_2 M+{\overline m}_2)$ gives all required values $1,...,d$ in the imaginary axis.
Indeed, $Mp_2$ gives the integer multiples of $p_2$ and ${\overline m}_2$ gives the `in between' values.

Similar result holds for the case $\nu _2=-1$ and $\nu _1=0,...,p_1-1$.
\item[(3)]
In the case $\nu _1=\nu _2=-1$ we use Eq.(\ref{619}).
The $m$ takes all values $1,...,d$ in the real axis.
For each $m$, the $N$ gives all required values $1,...,d$ in the imaginary axis.

\end{itemize}
The above arguments do not depend on the value of $(\nu _1, \nu_2)$.
\end{proof}

\section{Triality between lines in finite geometries, WMUBs, and the zeros of their analytic representations}

\begin{definition}
${\cal A}(\nu _1,\nu _2)$ is the set of the $d$ parallel lines of zeros in ${\mathfrak S}$, of the $d$ vectors in a weak mutually unbiased basis:
\begin{eqnarray}
{\cal A}(\nu _1,\nu _2)=\{{\cal Z}(m;\nu _1,\nu_2)|m\in {\mathbb Z}(d)\};\;\;\;\nu_i\in{\mathbb Z}(p_i)
\end{eqnarray}
In the `unfactorized notation' this is
\begin{eqnarray}
&&A(1,\mu ^{-1}\nu )=\{{\cal Z} '(1,\mu ^{-1}\nu ;m )|m\in {\mathbb Z}(d)\}\nonumber\\
&&A(p_1, s_1+\nu _2s_2 )=\{{\cal Z} '(p_1, s_1+\nu _2s_2 ;m)|m\in {\mathbb Z}(d)\}\nonumber\\
&&A(p_2, s_2+\nu _1s_1 )=\{{\cal Z} '(p_2, s_2+\nu _1s_1 ;m)|m\in {\mathbb Z}(d)\}\nonumber\\
&&A(0,1)=\{{\cal Z} '(0,1;m)|m\in {\mathbb Z}(d)\}.
\end{eqnarray}
\end{definition}
Each of these sets is characterized by the slope of the lines it contains.
In the proposition below, we use the slopes of these lines. We also define slopes of a line $L(\rho, \sigma)$ in ${\cal G}(d)$ as $\frac{\sigma}{\rho}$.
Two lines $L(\rho, \sigma)$ and $L(\rho ', \sigma ')$ have the same slope if
\begin{eqnarray}\label{sdf}
\rho \sigma'-\rho '\sigma=0\;({\rm mod}\;d).
\end{eqnarray}
\begin{theorem}
\mbox{}
\begin{itemize}
\item[(1)]
There is a triality between
\begin{itemize}
\item
the weak mutually unbiased bases in $H(d)$
\item
the non-near linear finite geometry ${\mathcal G}(d)$ associated with the phase space ${\mathbb Z}(d)\times {\mathbb Z}(d)$
\item
the lattice ${\mathfrak Z}(d)$ in the cell ${\mathfrak S}$, which we also regard as a non-near linear finite geometry ${\mathbb Z}(d)$
\end{itemize}
as follows:
\begin{eqnarray}\label{trio}
{\cal B}(\nu _1,\nu _2)\;\leftrightarrow\;{\cal L}(\nu _1,\nu _2)\;\leftrightarrow\;{\cal A}(\nu _1,\nu _2) 
\end{eqnarray}
\item[(2)]
In this triality
\begin{itemize}
\item
the overlap between vectors in the WMUBs is 
$({\cal B}(\nu_1,\nu_2),{\cal B}(\nu_1',\nu_2'))=r(\nu _1, \nu _2|\nu _1', \nu _2')/d$, where $r(\nu _1, \nu _2|\nu _1', \nu _2')$ has been given in Eq.(\ref{200}).
\item
two lines ${\cal L}(\nu_1,\nu_2)$ and ${\cal L}(\nu_1',\nu_2')$ have in common $r(\nu _1, \nu _2|\nu _1', \nu _2')$ points 
\item
for any $m$, the lines ${\cal Z}(m; \nu_1,\nu_2)$ in ${\cal A}(\nu_1,\nu_2)$ and ${\cal Z}(m; \nu_1',\nu_2')$ in ${\cal A}(\nu_1',\nu_2')$
have $r(\nu _1, \nu _2|\nu _1', \nu _2')$ points in common
\end{itemize}
\end{itemize}
\end{theorem}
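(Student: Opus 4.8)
The plan is to treat the two assertions separately, since most of the machinery is already in place. For part (1), the correspondence ${\cal B}(\nu_1,\nu_2)\leftrightarrow{\cal L}(\nu_1,\nu_2)$ is the duality of Eq.(\ref{aaa}) established in \cite{W2}, so it suffices to produce the second bijection ${\cal B}(\nu_1,\nu_2)\leftrightarrow{\cal A}(\nu_1,\nu_2)$ and compose. This second bijection is read directly off Proposition \ref{rrr}: to each vector $\ket{{\cal X}(\nu_1,\nu_2);\overline m_1,\overline m_2}$ that proposition assigns an explicit line of $d$ zeros; as $m$ varies these $d$ lines are mutually parallel, with a common slope depending only on $(\nu_1,\nu_2)$, and their union is the family ${\cal A}(\nu_1,\nu_2)$. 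Since distinct pairs $(\nu_1,\nu_2)$ yield distinct slopes, the assignment $(\nu_1,\nu_2)\mapsto{\cal A}(\nu_1,\nu_2)$ is injective, and as both index sets have $\psi(d)=(p_1+1)(p_2+1)$ elements it is a bijection. Composing the two gives the triality of Eq.(\ref{trio}).

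For part (2) the first two bullets are already available: the overlap formula $({\cal B}(\nu_1,\nu_2),{\cal B}(\nu_1',\nu_2'))=r(\nu_1,\nu_2|\nu_1',\nu_2')/d$ is Eq.(\ref{jjj}) together with Eq.(\ref{200}), and the statement that ${\cal L}(\nu_1,\nu_2)$ and ${\cal L}(\nu_1',\nu_2')$ share $r(\nu_1,\nu_2|\nu_1',\nu_2')$ points is exactly Proposition \ref{1234}. The only new content is the third bullet, and the key device for it is to pass from the complex plane to the lattice. I would identify the lattice ${\mathfrak Z}(d)$ of Eq.(\ref{62}) with the phase space ${\mathbb Z}(d)\times{\mathbb Z}(d)$ through ${\mathfrak z}(r,s)\mapsto(r,s)$, i.e. by subtracting $\tfrac12(1+i)$ and reducing the real and imaginary coordinates modulo $d$. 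Under this identification every line of zeros ${\cal Z}(m;\nu_1,\nu_2)$ becomes a subset of ${\mathbb Z}(d)\times{\mathbb Z}(d)$, and the heart of the argument is to show that it is a coset (translate) of the maximal line ${\cal L}(\nu_1,\nu_2)$, carrying the same slope $\mu^{-1}\nu$ in the sense of Eq.(\ref{sdf}).

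To establish the coset property I would start from the explicit zeros of Proposition \ref{rrr}. In the generic case Eq.(\ref{MM1}) the real coordinate is $N-\tfrac12$, so it runs once through $0,\dots,d-1$, while the imaginary coordinate is an affine function of $N$; reducing modulo $d$, and using $d+1\equiv1\pmod d$ together with the oddness of $d$ (so that $2^{-1}=(d+1)/2$), the half-integer shifts cancel and the pair $(r,s)$ satisfies $s\equiv\mu^{-1}\nu\,r+c_m\pmod d$ for a constant $c_m$ containing the term $-\mu^{-1}m$. This exhibits ${\cal Z}(m;\nu_1,\nu_2)$ as the coset of slope $\mu^{-1}\nu$ and intercept $c_m$, matching the slope of $L(1,\mu^{-1}\nu)={\cal L}(\nu_1,\nu_2)$. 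The same reduction applied to Eq.(\ref{620}) and Eq.(\ref{619}) handles the cases with some $\nu_i=-1$; alternatively the three cases can be unified by working in the unfactorized notation, where each line of zeros corresponds to a single maximal line $L(\rho,\sigma)$ through the origin.

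The intersection count then becomes a congruence. For a fixed label $m$, a point lies on both ${\cal Z}(m;\nu_1,\nu_2)$ and ${\cal Z}(m;\nu_1',\nu_2')$ iff $\mu^{-1}(\nu-\nu')\,r\equiv c_m'-c_m\pmod d$, whose number of solutions is $\gcd(\nu-\nu',d)$ when the right-hand side is divisible by that gcd, and $0$ otherwise. Writing $\nu-\nu'=(\nu_1-\nu_1')s_1+(\nu_2-\nu_2')s_2$ and reducing modulo $p_1$ and $p_2$ with Eq.(\ref{20A}) gives $\gcd(\nu-\nu',d)\in\{1,p_1,p_2\}$ precisely according to whether $\nu_1=\nu_1'$ and/or $\nu_2=\nu_2'$, reproducing the values $r(\nu_1,\nu_2|\nu_1',\nu_2')$ of Eq.(\ref{200}). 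I expect the main obstacle to be the divisibility condition on $c_m'-c_m$: the count equals $r$, rather than collapsing to $0$, only if the two cosets actually meet, which requires $c_m'-c_m$ to be divisible by $p_1$ when $\nu_1=\nu_1'$ (and by $p_2$ when $\nu_2=\nu_2'$). This is where the choice of the common label $m$ is essential; I would show that the $-\mu^{-1}m$ contribution to both intercepts is identical, so that $c_m'-c_m$ depends only on $\nu,\nu'$ and, by the same Chinese-remainder bookkeeping used in the proof of Proposition \ref{1234}, lies in $p_1{\mathbb Z}(d)$ exactly when $\nu_1=\nu_1'$. This closes the third bullet and, with the parts already cited, completes the theorem.
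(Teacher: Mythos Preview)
Your argument is correct and, for part~(1), essentially matches the paper: both establish the new leg of the triality by reading the slope of the zero-lines off Proposition~\ref{rrr} and matching it to the slope of the corresponding line in ${\cal G}(d)$ (the paper builds ${\cal L}\leftrightarrow{\cal A}$ directly, while you build ${\cal B}\leftrightarrow{\cal A}$ and compose, but this is cosmetic). For the third bullet of part~(2) you take a genuinely different route. The paper simply \emph{transfers} the count already obtained for ${\cal G}(d)$: from Proposition~\ref{1234} it knows the $r$ common points of ${\cal L}(\nu_1,\nu_2)$ and ${\cal L}(\nu_1',\nu_2')$ are parameterized by the values $\lambda$ with $\lambda\mu^{-1}(\nu-\nu')\equiv 0\pmod d$, and then plugs those same $\lambda$ into Eq.~(\ref{MM1}) as the index $N$ to exhibit $r$ coincident zeros. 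You instead work intrinsically in ${\mathfrak Z}(d)$: you identify the lattice with ${\mathbb Z}(d)\times{\mathbb Z}(d)$, recognize each ${\cal Z}(m;\nu_1,\nu_2)$ as the coset $s\equiv\mu^{-1}\nu\,r+c_m$, and then count intersections by solving the single congruence $\mu^{-1}(\nu-\nu')\,r\equiv c_m'-c_m\pmod d$ via $\gcd$ and CRT. What your approach buys is that it isolates and resolves the one nontrivial point the paper leaves implicit, namely that the two cosets with the \emph{same} label $m$ actually meet (your divisibility check on $c_m'-c_m$); what the paper's approach buys is brevity, since it piggybacks on Proposition~\ref{1234} rather than redoing the arithmetic.
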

\begin{proof}
\mbox{}
\begin{itemize}
\item[(1)]
We have explained earlier (Eq.(\ref{aaa})) that ${\cal B}(\nu _1,\nu _2)\;\leftrightarrow\;{\cal L}(\nu _1,\nu _2)$ and we now prove that
${\cal L}(\nu _1,\nu _2)\;\leftrightarrow\;{\cal A}(\nu _1,\nu _2)$.
The proof is based on showing that the corresponding slopes are equal.
We consider the following three cases:
\begin{itemize}
\item
In the case $\nu _i=0,...,p_i-1$, Eq.(\ref{MM1}) shows that the slope of ${\cal A}(\nu _1,\nu _2)$ in ${\mathfrak Z}(d)$ is $\mu ^{-1}\nu(d+1)$.
Eq.(\ref{line2}) shows that the slope of the line ${\cal L}(\nu _1,\nu_2)=L(1,\mu ^{-1}\nu)$ in ${\cal G}(d)$ is also $\mu ^{-1}\nu$.
The two slopes are equal (modulo $d$). 
\item
In the case $\nu _1=-1$ and $\nu _2=0,...,p_2-1$, Eq.(\ref{620}) shows that the slope 
of ${\cal A}(-1,\nu _2)$ in ${\mathfrak Z}(d)$ is $\frac{\nu _2(1+p_2)}{p_1}$.
Eq.(\ref{line3}) shows that the slope of the line ${\cal L}(-1,\nu_2)=L(p_1,s_1+s_2\nu_2)$ in ${\cal G}(d)$ is $\frac{s_1+s_2\nu _2}{p1}$.
These two slopes are equal according to Eq.(\ref{sdf}).
Analogous result holds for the case $\nu _2=-1$ and $\nu _1=0,...,p_1-1$.
\item
In the case $\nu_1=\nu _2=-1$, Eq.(\ref{619}) shows that the ${\cal A}(-1,-1)$ in ${\mathfrak Z}(d)$ is vertical.
The line ${\cal L}(-1,-1)=L(0,1)$ in in ${\cal G}(d)$ is also vertical
\end{itemize}
\item[(2)]
In Eq.(\ref{jjj}), we have explained that $({\cal B}(\nu_1,\nu_2),{\cal B}(\nu_1',\nu_2'))=r(\nu _1, \nu _2|\nu _1', \nu _2')/d$.
Also in proposition \ref{1234} we have shown that two lines ${\cal L}(\nu_1,\nu_2)$ and ${\cal L}(\nu_1',\nu_2')$ have in common $r(\nu _1, \nu _2|\nu _1', \nu _2')$ points.
Below we prove analogous result for the lines of zeros.

We consider the lines ${\cal L}(\nu _1, \nu_2)$ and ${\cal L}(\nu _1', \nu_2')$ and assume that they have $r$ points in common (where $r|d$).
We show that the lines ${\cal Z}(m; \nu_1,\nu_2)$ and ${\cal Z}(m; \nu_1',\nu_2')$ also have $r$ points in common, i.e., that
$\zeta (m,\nu_1,\nu _2,N)=\zeta (m,\nu_1',\nu _2',N')$ for $r$ pairs $(N,N')$.
We give explicit proof only for the case that all $\nu_i,\nu_i'=0,...,d-1$. The proof in the other cases is similar. 

In this case, using Eq.(\ref{line2}) we conclude that there exist $r$ pairs $(\lambda , \lambda ')$ such that
\begin{eqnarray}
(\lambda , \lambda \mu ^{-1}\nu )=(\lambda', \lambda ' \mu ^{-1}\nu ');\;\;\;\;\;\nu=\nu _1s_1+\nu _2s_2.
\end{eqnarray}
This leads to $\lambda =\lambda '\;({\rm mod}\;d)$ and  $\lambda \mu ^{-1}\nu =\lambda' \mu ^{-1}\nu ' \;({\rm mod}\;d)$.
We then use Eq.(\ref{MM1}) to prove that
\begin{eqnarray}
\zeta (m,\nu_1,\nu _2,\lambda )=\zeta (m,\nu_1',\nu _2',\lambda ').
\end{eqnarray}
for each of the $r$ pairs $(\lambda , \lambda ')$.

\end{itemize}
\end{proof}
Table \ref{t1} shows explicitly this triality for the case $d=21$.
The precise correspondence of the various quantities involved in this triality, is summarized in table \ref{t2}.

\begin{example}
We consider an example of two sets of lines of zeros in ${\mathfrak Z}(21)$, which is analogous to example \ref{exa} (in this case $p_1=3$ and $p_2=7$).
They are the ${\cal A} (2,3)$ and ${\cal A} (2,5)$.
We take the line of zeros ${\cal Z}(4;2,3)$ from the set ${\cal A} (2,3)$,
and the line of zeros ${\cal Z}(4;2,5)$ from the set ${\cal A} (2,5)$ (i.e., we take as an example, $m=4$).
The lines ${\cal Z}(4;2,3)$ and ${\cal Z}(4;2,5)$ are shown in Fig.\ref{f2}, and they
have in common the $p_1=3$ zeros:
\begin{eqnarray}
N=4\;\rightarrow\;{\zeta}(4;2,3,N)={\zeta}(4;2,5;N)=3.5+i4.5\nonumber\\
N=4+p_2=11\;\rightarrow\;{\zeta}(4;2,3,N)={\zeta}(4;2,5;N)=10.5+i18.5\nonumber\\
N=4+2p_2=18\;\rightarrow\;{\zeta}(4;2,3,N)={\zeta}(4;2,5;N)=17.5+i11.5
\end{eqnarray}
If we regard the $3.5+i4.5$ as `origin', these three points have coordinates $(0,0)$, $(7,14)$ and $(14,7)$,
which are exactly the same as in the example \ref{exa}.
Comparison of Figs.\ref{f1},\ref{f2} shows this.

We also consider the case $m=5$.
The lines ${\cal Z}(5;2,3)$ and ${\cal Z}(5;2,5)$ have in common the $p_1=3$ zeros:
\begin{eqnarray}
N=4\;\rightarrow\;{\zeta}(5;2,3,N)={\zeta}(5;2,5;N)=3.5+i6.5\nonumber\\
N=4+p_2=11\;\rightarrow\;{\zeta}(5;2,3,N)={\zeta}(5;2,5;N)=10.5+i20.5\nonumber\\
N=4+2p_2=18\;\rightarrow\;{\zeta}(5;2,3,N)={\zeta}(5;2,5;N)=17.5+i13.5
\end{eqnarray}
Again we regard the $3.5+i6.5$ as `origin', and these three points have coordinates $(0,0)$, $(7,14)$ and $(14,7)$,
as above and as in the example \ref{exa}.

It is seen that for any $m$, the lines ${\cal Z}(m; \nu_1,\nu_2)$ in ${\cal A}(\nu_1,\nu_2)$ and ${\cal Z}(m; \nu_1',\nu_2')$ in ${\cal A}(\nu_1',\nu_2')$ have $r$ points in common (where $r=1,p_1,p_2$).
\end{example}

\section{Discussion}

The objective of this paper is to use analytic representations and their zeros, in the study of the general area of mutually unbiased bases.
Quantum states are represented with the analytic functions of Eq.(\ref{aaa1}).
The zeros of these analytic functions determine uniquely the quantum state of the system.

We have shown that there is a triality that links lines in ${\cal G}(d)$, WMUBs in $H(d)$, and lines of zeros of WMUBs in  ${\mathfrak Z}(d)$.
The duality between lines in ${\cal G}(d)$, and WMUBs in $H(d)$ is surprising, but with hindsight it might be argued that
quantum states in the Hilbert space inherit the properties of the underline phase space.
But the appearance of lines of zeros of WMUBs in  ${\mathfrak Z}(d)$ as a third component in this triality is certainly very surprising, and it reaffirms
the important (albeit counterintuitive) role of analytic functions in the description of quantum systems. 

The general methodology is the factorization of a $d$-dimensional system into subsystems with prime dimension
(for simplicity we have taken $d=p_1p_2$). 
Tensor products of mutually unbiased bases in each subsystem lead to weak mutually unbiased bases, with overlaps given in Eq.(\ref{8}).
There is a duality between weak mutually unbiased bases, and maximal lines through the origin in the  
${\cal G}(d)={\mathbb Z}(d) \times {\mathbb Z}(d)$ phase space.
This duality has been extended in this paper into a triality, with the involvement of the zeros of analytic functions that represent the quantum states.

The method can also be used when 
$d=p_1\times...\times p_N$. In this case,
there is an isomorphism between $H(d)$ and $H(p_1)\otimes...\otimes H(p_N)$, and the weak mutually unbiased bases are tensor products of
mutually unbiased bases in each $H(p_i)$. 
Bijective maps between ${\mathbb Z}(d)$ and ${\mathbb Z}(p_1)\times ...\times {\mathbb Z}(p_N)$ 
(generalizations of Eqs.(\ref{map1}), (\ref{map2})) can be found in \cite{2} (and in \cite{Good}).
Using them we can factorize the lines in the finite geometry ${\mathbb Z}(d)\times {\mathbb Z}(d)$.
We can also define analytic representations in a cell in the complex plane, and
factorize the lines of their zeros.
The methodology here is analogous to the one that we presented, but the technical details are more complicated.

Existing work in the general area of mutually unbiased bases is based on discrete Mathematics. The present work links them with the theory of analytic functions.

\newpage
\begin{table}
\centering
\caption{Correspondence between WMUBs in the Hilbert space $H(21)$, lines in the ${\cal G}(21)$ phase space,  and sets of lines of zeros in the lattice ${\mathfrak Z}(21)$.
Both the `factorized notation' and `unfactorized notation' are shown.}
\begin{tabular}{|p{4cm}|p{4cm}|p{4cm}|}
\hline
$H(21)$ & ${\cal G}(21)$ & $ \mathfrak {Z}(21)$ \\
\hline
{ \begin{align*} B(0,1) &= \mathcal{B}(-1,-1) \\ B(1,0) &= \mathcal{B}(0,0) \\ B(1,1) &= \mathcal{B}(1,3) \\ B(1,2)  &= \mathcal{B}(2,6)  \\  B(1,3) &= \mathcal{B}(0,2) \\ B(1,4) &= \mathcal{B}(1,5) \\ B(1,5)  &=  \mathcal{B}(2,1) \\ B(1,6) &= \mathcal{B}(0,4)   \\ B(1,7)&= \mathcal{B}(1,0) \\ B(1,8) &= \mathcal{B}(2,3)  \\ B(1,9) &=  \mathcal{B}(0,6) \\ B(1,10) &= \mathcal{B}(1,2)  \\ {B}(1,11) &= \mathcal{B}(2,5) \\ {B}(1,12) &= \mathcal{B}(0,1) \\ B(1,13) &= \mathcal{B}(1,4) \\ B(1,14) &= \mathcal{B}(2,0) \\ B(1,15) &= \mathcal{B}(0,3) \\
 B(1,16) &= \mathcal{B}(1,6) \\ B(1,17) &= \mathcal{B}(2,2) \\ B(1,18) &= \mathcal{B}(0,5) \\ B(1,19) &= \mathcal{B}(1,1) \\  B(1,20) &=  \mathcal{B}(2,4) \\ B(3,7) &= \mathcal{B}(-1,0) \\ B(3,1) &= \mathcal{B}(-1,1) \\  B(3,16) &= \mathcal{B}(-1,2) \\ B(3,10) &= \mathcal{B}(-1,3) \\ B(3,4) &= \mathcal{B}(-1,4) \\ B(3,19) &= \mathcal{B}(-1,5) \\ B(3,13)  &=  \mathcal{B}(-1,6) \\ B(7,15) &= \mathcal{B}(0,-1) \\ B(7,1)&= \mathcal{B}(1,-1) \\ B(7,8) &= \mathcal{B}(2,-1) \\
 \end{align*}}
&
{ \begin{align*} L(0,1) &= {\cal L}(-1,-1) \\ L(1,0) &= {\cal L}(0,0) \\ L(1,1) &= {\cal L}(1,3) \\ L(1,2)  &= {\cal L}(2,6)  \\  L(1,3) &= {\cal L}(0,2) \\ L(1,4) &= {\cal L}(1,5) \\ L(1,5)  &=  {\cal L}(2,1) \\ L(1,6) &= {\cal L}(0,4)   \\ L(1,7)&= {\cal L}(1,0) \\ L(1,8) &= {\cal L}(2,3)  \\ L(1,9) &=  {\cal L}(0,6) \\
 L(1,10) &= {\cal L}(1,2)  \\ L(1,11) &= {\cal L}(2,5) \\ L(1,12) &= {\cal L}(0,1) \\ L(1,13) &= {\cal L}(1,4) \\ L(1,14) &= {\cal L}(2,0) \\ L(1,15) &= {\cal L}(0,3) \\
 L(1,16) &= {\cal L}(1,6) \\ L(1,17) &= {\cal L}(2,2) \\ L(1,18) &= {\cal L}(0,5) \\ L(1,19) &={\cal L}(1,1) \\  L(1,20) &=  {\cal L}(2,4) \\
  L(3,7) &= {\cal L}(-1,0) \\ L(3,1) &= {\cal L}(-1,1) \\  L(3,16) &= {\cal L}(-1,2) \\L(3,10) &= {\cal L}(-1,3) \\ L(3,4) &= {\cal L}(-1,4) \\ L(3,19) &= {\cal L}(-1,5) \\ L(3,13)  &=  {\cal L}(-1,6) \\ L(7,15) &= {\cal L}(0,-1) \\ L(7,1)&= {\cal L}(1,-1) \\ L(7,8) &= {\cal L}(2,-1) \\
 \end{align*}}
&

{ \begin{align*} A(0,1) &= {\cal A}(-1,-1) \\ A(1,0) &= {\cal A}(0,0) \\ A(1,1) &= {\cal A}(1,3) \\ A(1,2)  &= {\cal A}(2,6)  \\  A(1,3) &= {\cal A}(0,2) \\ A(1,4) &= {\cal A}(1,5) \\ A(1,5)  &=  {\cal A}(2,1) \\ A(1,6) &= {\cal A}(0,4)   \\ A(1,7)&= {\cal A}(1,0) \\ A(1,8) &= {\cal A}(2,3)  \\ A(1,9) &=  {\cal A}(0,6) \\
 A(1,10) &= {\cal A}(1,2)  \\ A(1,11) &= {\cal A}(2,5) \\ A(1,12) &= {\cal A}(0,1) \\ A(1,13) &= {\cal A}(1,4) \\ A(1,14) &= {\cal A}(2,0) \\ A(1,15) &= {\cal A}(0,3) \\
 A(1,16) &= {\cal A}(1,6) \\ A(1,17) &= {\cal A}(2,2) \\ A(1,18) &= {\cal A}(0,5) \\ A(1,19) &={\cal A}(1,1) \\  A(1,20) &=  {\cal A}(2,4) \\
  A(3,7) &= {\cal A}(-1,0) \\ A(3,1) &= {\cal A}(-1,1) \\  A(3,16) &= {\cal A}(-1,2) \\A(3,10) &= {\cal A}(-1,3) \\ A(3,4) &= {\cal A}(-1,4) \\ A(3,19) &= {\cal A}(-1,5) \\ A(3,13)  &=  {\cal A}(-1,6) \\ A(7,15) &= {\cal A}(0,-1) \\ A(7,1)&= {\cal A}(1,-1) \\ A(7,8) &= {\cal A}(2,-1) \\
 \end{align*}}
 \\
\hline
\end{tabular}\label{t1}
\end{table}

\begin{table}[h]
\renewcommand{\arraystretch}{2}
\centering
\caption{Correspondence between the various quantities in the triality of Eq.(\ref{trio}).}
\begin{tabular}{|p{0.33\columnwidth}|p{0.33\columnwidth}|p{0.33\columnwidth}|}
\hline
{\bf WMUBs in $H(d)$}&{\bf Lines in ${\cal G}(d)$} &  {\bf Lines of zeros in ${\mathfrak Z}(d)$}\\
\hline
$\psi(d)$ WMUB ${\cal B}(\nu_1,\nu_2)$&$\psi(d)$ maximal lines through the origin ${\cal L}(\nu_1,\nu_2)$&
$\psi (d)$ sets ${\cal A}(\nu_1,\nu_2)$ of parallel lines of zeros
\\\hline
$d$ orthogonal vectors in each WMUB ${\cal B}(\nu_1,\nu_2)$&
$d$ points in each ${\cal L}(\nu_1,\nu_2)$&
$d$ parallel lines of zeros in each set ${\cal A}(\nu_1,\nu_2)$. Each line contains $d$ zeros.
\\\hline
{\small $({\cal B}(\nu_1,\nu_2),{\cal B}(\nu_1',\nu_2'))=\frac{r(\nu _1, \nu _2|\nu _1', \nu _2')}{d}$ }

(Eq.(\ref{200}))&
two lines ${\cal L}(\nu_1,\nu_2)$ and ${\cal L}(\nu_1',\nu_2')$ have in common $r(\nu _1, \nu _2|\nu _1', \nu _2')$ points& for any $m$, the lines ${\cal Z}(m; \nu_1,\nu_2)$ in ${\cal A}(\nu_1,\nu_2)$ and ${\cal Z}(m; \nu_1',\nu_2')$ in ${\cal A}(\nu_1',\nu_2')$
have $r(\nu _1, \nu _2|\nu _1', \nu _2')$ points in common\\\hline
\end{tabular}\label{t2}
\end{table}
\newpage

\begin{figure}
\caption{The lines ${\cal L}(2,3)$ (circles), and ${\cal L}(2,5)$ (crosses), in the
${\cal G} (21)$ finite geometry.
The two lines have in common the three points $(0,0)$, $(7,14)$, $(14,7)$.}
\includegraphics[scale = 1.0]{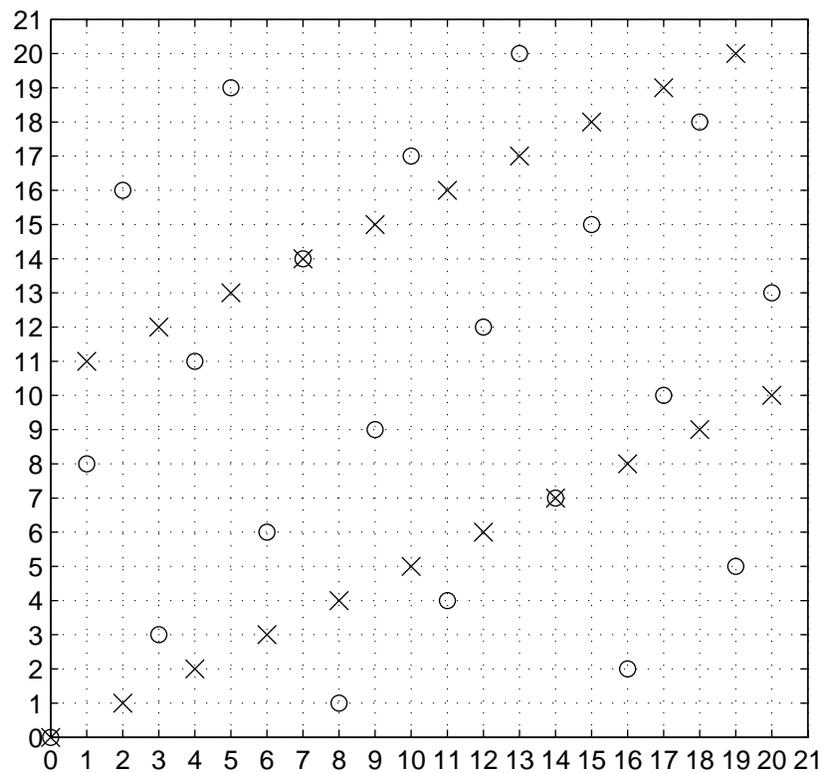}
\label{f1}
\end{figure}

\begin{figure}
\caption{The lines of zeros ${\cal Z}(4;2,3)$ (circles), and ${\cal Z}(4;2,5)$ (crosses), in the 
cell ${\mathfrak S}$ in the complex plane .
The two lines have in common the zeros $3.5+i4.5$, $10.5+i18.5$, $17.5+i11.5$.}
\includegraphics[scale = 1.0]{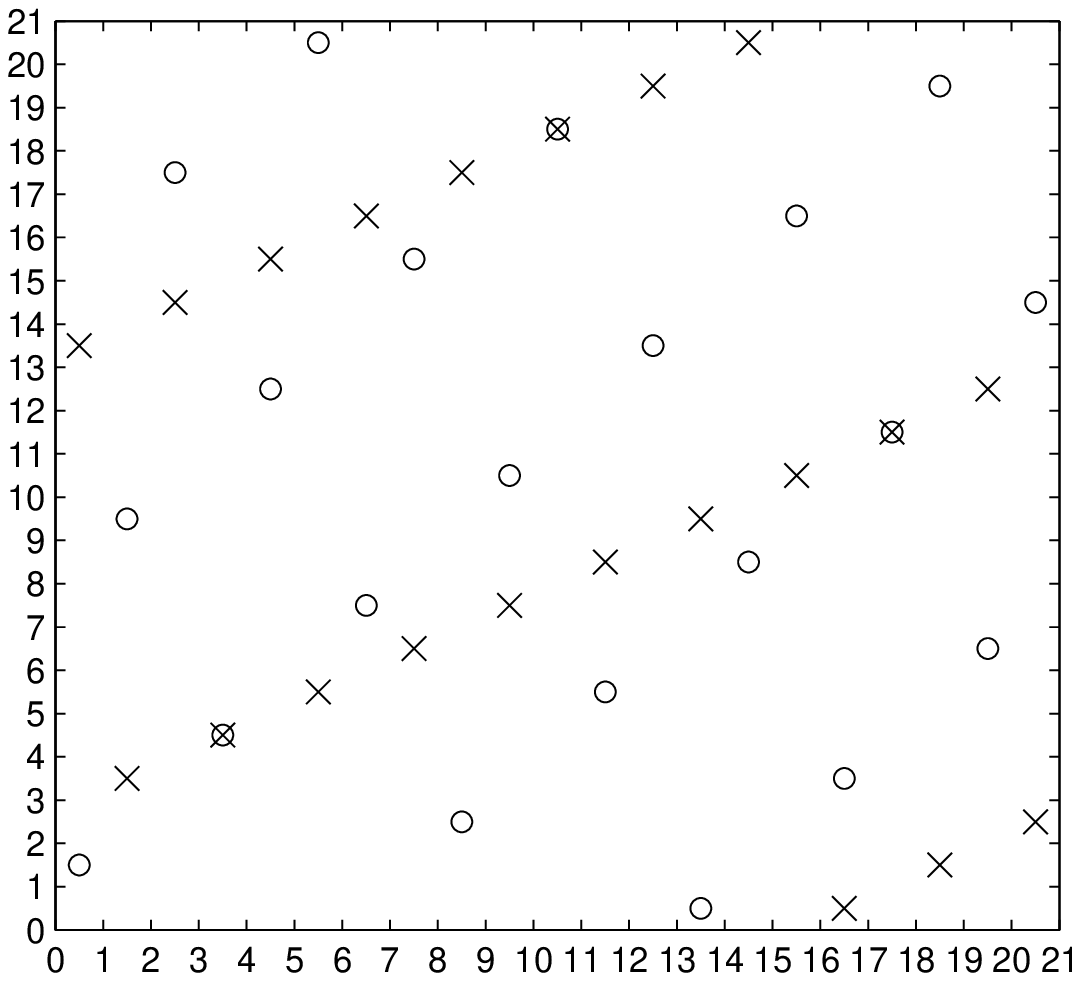}
\label{f2}
\end{figure}


\begin{thebibliography}{90}
\bibitem{SCH}
J. Schwinger, `Quantum Kinematics and Dynamics' (Benjamin, New York, 1970)
\bibitem{1}
P. Stovicek, J. Tolar,Rep. Math. Phys. 20, 157 (1984) 
\bibitem{2}
A. Vourdas, Rep. Prog. Phys. 67, 1 (2004)
\bibitem{3}
A. Vourdas, J. Phys. A40, R285 (2007)
\bibitem{4}
G. Bjork, A.B. Klimov, L.L. Sanchez-Soto, Prog. Optics 51, 469 (2008)
\bibitem{5}
M. Kibler, J. Phys. A42, 353001 (2009)
\bibitem{6}
N. Cotfas, J.P. Gazeau, J.Phys. A43, 193001(2010)
\bibitem{7}
T. Durt, B.G. Englert, I. Bengtsson, K. Zyczkowski, Int. J. Quantum Comp. 8, 535 (2010)
\bibitem{m1}
W. Wootters, B.D. Fields, Ann. Phys. (NY), 191, 363 (1989)
\bibitem{m2}   
S. Bandyopadhyay, P.O. Boykin, V.Roychowdhury, F. Vatan, Algorithmica 34, 512 (2002)
\bibitem{m3}
K. Gibbons, M.J. Hoffman, W. Wootters, Phys. Rev. A70, 062101 (2004)
\bibitem{m4}	
A. Klappenecker, M. Rotteler, Lect. Notes Comp. Science 2948, 137 (2004)
\bibitem{m5}
J.L. Romero, G. Bjork, A.B. Klimov, L.L. Sanchez-Soto, Phys. Rev. A72, 062310 (2005)
\bibitem{m6}	
M. Saniga, M. Planat, J. Phys. A39, 435 (2006)
\bibitem{m7}
P. Sulc, J. Tolar, J. Phys. A40, 15099 (2007)
\bibitem{m8}
J. Tolar, G. Chadzitaskos, J. Phys. A42, 245306 (2009)
\bibitem{m9}
S. Brierley, S. Weigert, I. Bengtsson, Quant. Inf. Comp., 10, 803 (2010)
\bibitem{m10}
W. Kantor, J. Math. Phys. 53, 032204 (2012)
\bibitem{B}
T. Beth, D. Jungnickel, H. Lenz, `Design Theory' (Cambridge Univ. Press, Cambridge, 1993)
\bibitem{B1}
G. Zauner, Int. J. Quantum Inf. 9, 445 (2011)
\bibitem{LS}
J. Denes, A.D. Keedwell, `Latin Squares and their Applications' (Academic, New York, 1974)
\bibitem{W1}
M. Shalaby, A. Vourdas, J. Phys. A45, 052001 (2012)
\bibitem{W2}
M. Shalaby, A. Vourdas, Ann. Phys. 337, 208 (2013) 
\bibitem{f1}
L.M. Batten, `Combinatorics of finite geometries', Cambridge Univ. Press, Cambridge, 1997
\bibitem{f2}
J.W.P. Hirchfeld, `Projective geomtries over finite fields' (Oxford Univ. Press, Oxford, 1979)
\bibitem{f3}
J.W.P. Hirchfeld, J.A. Thas, `General Galois geometries' (Oxford Univ. Press, Oxford, 1991)
\bibitem{r1}
M. Planat, M. Saniga, M. R. Kibler, SIGMA, 2, 66 (2006)
\bibitem{r2}
H. Havlicek, M. Saniga, J. Phys. A40, F943 (2007)
\bibitem{r3}
H. Havlicek, M. Saniga, J. Phys. A41, 015302 (2008)
\bibitem{r4}
M. Korbelar, J. Tolar, J. Phys. A43, 375302 (2010) 
\bibitem{A0}
V. Bargmann, Commun. Pure Appl. Math. {\bf 14}, 187 (1961)
\bibitem{A00}
V. Bargmann, Commun. Pure Appl. Math. {\bf 20}, 1 (1967)
\bibitem{A1}
A. Perelomov, `Generalized coherent states and their applications', (Springer, Berlin, 1986) 
\bibitem{A2}
B.C. Hall, Contemp. Math. 260, 1 (2000)
\bibitem{A3}
A. Vourdas, J. Phys. A39, R65 (2006) 
\bibitem{A4}
P. Leboeuf, J. Phys. A24, 4575 (1991)
\bibitem{A5}
M.B. Cibils, Y. Cuche, P. Leboeuf, W.F. Wreszinski, Phys. Rev A46, 4560 (1992)
\bibitem{A6}
S. Nonnenmacher, A. Voros, J. Phys. A30, 295 (1997)
\bibitem{A7}
H.J. Korsch, C. M\'uller, H. Wiescher, J. Phys. A30, L677 (1997)
\bibitem{A8}
A. Nonnenmacher, A. Voros, J. Stat. Phys. 92, 431 (1998)
\bibitem{A9}
F. Toscano, A.M.O. de Almeida, J. Phys. A32, 6321 (1999) 
\bibitem{A10}
D. Biswas, S. Sinha, Phys. Rev. E60, 408 (1999)
\bibitem{C1}
R.P. Boas "Entire functions" (Academic,New York,1954)
\bibitem{C2}
B.Ja. Levin,"Distribution of zeros of entire functions" (American Math. Soc, Rhode Island, 1964)
\bibitem{C3}
B.Ja. Levin,"Lectures on entire functions" (American Math.Soc, Rhode Island, 1996)
\bibitem{AN4}
S. Zhang, A. Vourdas, J. Phys. A37, 8349 (2004); and corrigendum in J. Phys. A38, 1197 (2005) 
\bibitem{AN5}
M. Tabuni, S. Zhang, A. Vourdas, Phys. Scr. 82, 038107 (2010)
\bibitem{THETA}
D Mumford, `Tata lectures on Theta', Vols 1,2,3 (Birkhauser, Boston, 1983)
\bibitem{AN6}
P. Leboeuf, A. Voros, J. Phys. A23, 1765 (1990)
\bibitem{zak}
J. Zak, J. Phys. A44, 345305 (2011) 
\bibitem{Good}
I.J. Good, IEEE Trans. Computers, C-20, 310 (1971)
\bibitem{SV2}
M. Shalaby, A. Vourdas, J. Phys. A44, 345303 (2011)

\end{thebibliography}
\end{document}